\newtheorem{theorem}{Theorem}[section]
\newtheorem{lemma}{Lemma}[section]
\newtheorem{proposition}{Proposition}[section]
\newtheorem{corollary}{Corollary}[section]
\theoremstyle{definition}
\numberwithin{equation}{section}
\newcommand{\D}{\displaystyle}
\begin{document}
\bibliographystyle{abbrv}
\title{Forward Hysteresis and Backward Bifurcation Caused by Culling in an Avian Influenza Model}


\author{Hayriye Gulbudak$^*$}
\address{Department of Mathematics,  
University of Florida,
358 Little Hall,
PO Box 118105,
Gainesville, FL 32611--8105 }
\email{hgulbudak@ufl.edu}
\author{Maia Martcheva}
\address{Department of Mathematics, 
University of Florida,
358 Little Hall,
PO Box 118105,
Gainesville, FL 32611--8105 }
\email{maia@ufl.edu}

\thanks{$^*$author for correspondence}

\begin{abstract}

The emerging threat of a human pandemic caused by the H5N1 avian influenza virus strain magnifies the need for controlling the incidence of H5N1 infection in domestic bird populations. Culling is one of the most widely used control measures and has proved effective for isolated outbreaks.  However, the socio-economic impacts of mass culling, in the face of a disease which has become endemic in many regions of the world, can affect the implementation and success of culling as a control measure. We use mathematical modeling to understand the dynamics of avian influenza under different culling approaches. We incorporate culling into an SI model by considering the per capita culling rates to be general functions of the number of infected birds. Complex dynamics of the system, such as backward bifurcation and forward hysteresis, along with bi-stability, are detected and analyzed for two distinct culling scenarios. In these cases, employing other control measures temporarily can drastically change the dynamics of the solutions to a more favorable outcome for disease control.  

\bigskip

\noindent
{\sc Keywords:}  mathematical models,  differential equations,
reproduction number, culling, temporary control measures, H5N1, avian influenza, backward bifurcation, hysteresis, bistability.

\bigskip

\noindent
{\sc AMS Subject Classification: 92D30, 92D40}
\end{abstract}
\date{\today}
\maketitle
\pagestyle{myheadings}
\markboth{\sc }
{\sc Forward Hysteresis, Backward Bifurcation and Culling in  H5N1}

\baselineskip14pt

\section{Introduction}

 H5N1 (highly pathogenic avian influenza) has rapidly spread among wild and domestic bird populations in recent years.  With increasing frequency, the virus has shown the ability to infect mammalian species which are in close contact with infected birds \cite{PJG, FSPH, WHO3, CDC1, WHO2}.  Most notably, over 600 humans have contracted H5N1 since 1997 with a reported 60$\%$ mortality rate \cite{WHO1, CDC}. The most serious public health threat that H5N1 poses to humans is the potential appearance of an extremely virulent human-to-human transmittable strain of avian influenza \cite{Fleck, Fouchier, Rezza, ZSL, Wang1}. Reducing the probability of this occurring requires strong control measures. However, avian influenza is a complex disease, infecting multiple species of animals, which creates difficulties in tracking and controlling the disease. Hence, control has been directed at reducing incidence among poultry populations, since these are the main animal populations responsible for transmitting the disease to humans.  
 
   One of the main control measures applied to poultry is culling, i.e the targeted elimination of a portion of the poultry population in areas affected by avian influenza, to save the rest of the birds and reduce the possibility of further outbreaks. Wide-area (mass) culling has been successful for isolated outbreaks. However, mass elimination of poultry becomes too much of an economic burden in areas of wide-spread outbreaks and in countries dominated by smallholder farms. As the number of infected increases above a threshold level, it has been suggested that control measures be shifted from mass culling to a modified strategy. The modified strategy includes elimination of only infected flocks and high risk in-contact birds along with other control measures \cite{FAO0, FAO00}. In areas with backyard poultries, the selective culling of only infected and highly exposed flocks is often employed instead of mass culling, even for small outbreaks \cite{FAO0, FAO00, Modcull, WHO0}. 
   
   Mathematical modeling provides a way of understanding the complex epidemiology of avian influenza and can yield valuable insights on how different control strategies impact the disease dynamics. There have been several authors who have looked at culling in particular. Le Menach et al. analyzed a spatial farm-based model, which treats poultry farms as units, and found that an immediate depopulation of infected flocks following an accurate and quick diagnosis would have a greater impact than simply depopulating surrounding flocks \cite{culltrans}. Martcheva investigated the efficacy of culling in comparison with other control measures, and determined that culling without repopulation is the most effective control measure based upon sensitivity analysis \cite{Martcheva5}. Iwami et al. investigated a mathematical model for the spread of wild and mutant avian influenza, and explored the effectiveness of the prevention policies, namely elimination and quarantine policy \cite{Iwami1, Iwami2, Iwami3}. Shim and Galvani evaluate the effect of culling on the host-pathogen evolution \cite{Shim}.  Impulsive systems have also been considered for modeling culling, which will be discussed further in Section \ref{sec2}.

   Even though mathematical and statistical models have been focused on culling as a control strategy, almost no special attention has been put on how culling strategies differ from region to region as a result of socio-economic factors. In this article, we model avian influenza dynamics in domestic birds under the control measure of culling, giving special attention to these different culling strategies. We incorporate, into an SI model, various culling rates that are functions of the number of infected birds in the population. 
   
   For certain culling rates in our model we find complex bifurcations, namely backward bifurcation and forward hysteresis. In epidemiological models with backward bifurcation, the disease may persist even though the basic reproduction number, $\mathcal R_0$, is less then $1$. In this case, as $\mathcal R_0$ approaches unity from the left, there exist endemic equilibria in addition to the locally stable disease free equilibrium. This leads to bistable dynamics: If the initial number of infected individuals is small enough, the disease will die out; if, however, the disease level is above some threshold, then the disease will persist.  Backward bifurcations have been found and explored in several models from mathematical epidemiology \cite{Brauer, Li, Wang, Gumel, Martcheva, Zhou-SIRtreatment}.  On the other hand, forward hysteresis has rarely been detected or studied in epidemiological models. Forward hysteresis refers to the existence of multiple endemic equilibria and bistable dynamics when $\mathcal R_0>1$.  In this case, although the disease will always persist, there can be a dramatic difference in the asymptotic level of disease, depending on the initial conditions. Hu et al. recently studied an SIR model with saturating incidence and piecewise defined treatment, which they showed to exhibit forward hysteresis \cite{Hu}.
   
  Bistable dynamics can have important implications for control. During the onset of an outbreak, actions can be taken in order to ``drive the solution'' to the region of attraction corresponding to a low level equilibrium or disease free state. For example, a temporary reduction in transmission rate can produce this shift in asymptotic dynamics. Temporary control measures such as enhanced biosecurity and movement ban on poultry have the impact of reducing the transmission rate and, hence, may be important in ultimately bringing the disease under control.
   
   In section 2, we introduce our general model, derive the reproduction number $\mathcal R_0$ for the general model, analyze the general model, and describe the distinct culling functions that we will incorporate into the system. In section 3, we consider the case of mass culling as an example. In section 4 and 5, we give the motivation behind employment of modified culling and selective culling, incorporate the corresponding culling rates in the system, and analyze the resulting models. In section 6, we consider the implications of the bistable dynamics observed under selective and modified culling and present numerical simulations which highlight the impact of temporary control measures. In section 7, we conclude with a discussion about our results and their implicaations.       

\section{Modeling culling in avian influenza H5N1}\label{sec2}
  The first model of H5N1 influenza was introduced by Iwami et al. \cite{Iwami0,Iwami1,Iwami2}. The model does not explicitly take into account any control measures. In practice, culling infected and exposed poultry has been utilized in order to reduce the global spread of H5N1. We alter the basic avian flu model in birds to explicitly involve culling as a control measure.   Since culling is applied only after an outbreak has occurred, we assume that the culling rate depends on the number of infected individuals.
  The general model takes the form:
\begin{equation}
\label{model1}
{M:} \quad
\begin{cases}
\D\frac{d S}{dt} &\hspace{-4mm}= \Lambda -\displaystyle\frac{\beta I S}{N} - \mu
S-c_S \phi(I)S  \vspace{1.5mm},\\
\D\frac{dI}{dt} &\hspace{-4mm}=\displaystyle\frac{\beta I S}{N} -  (\mu +
\nu)I - c_I \psi(I) I\vspace{1.5mm}\\
\end{cases}
\end{equation}
with nonnegative initial conditions : $S(0), I(0) \geq 0.$  The state variables $S$ and $I$ represent the number of susceptible domestic birds and the number of infected domestic birds, respectively.  The total number of domestic birds is denoted by $N$, where $N=S+I$. For the parameters, $\Lambda$ is the recruitment rate of domestic birds, $\beta$ is the
transmission rate, $\mu$ is the mortality rate of domestic birds, $\nu$ is
the disease-induced death rate for birds, and $c_S$ and $c_I$ are the culling
constants for susceptible and infected birds respectively.  We assume susceptible domestic birds $S$ are culled at a rate 
$c_S \phi(I)$ and infected domestic birds $I$ are culled at a potentially higher
rate $c_I \psi(I)$.  The culling functions $\psi(I)$ and $\phi(I)$ are assumed to be non-negative continuous functions. 

Our choice of modeling culling as a continuous function dependent on the number of infected, $I$, reflects our aim to consider different culling strategies in which culling effort depends on $I$, i.e. increases, decreases or non-monotone with respect to $I$.   Impulsive systems can also been considered for modeling culling.  An advantage of modeling culling as a pulsed process is that culling does not occur continuously through time.  Terry \cite{terry} considered an impulsive system to model culling of crop pests, where the culling occurred as pulses applied at fixed times.  However, employment of culling at fixed times may not be realistic for avian influenza since it ignores the fact that culling occurs as a response to outbreaks.  Another possibility is state dependent impulsive models, which was considered for pulse vaccination in an SIR model \cite{pulseSIR}.  In this approach, impulsive culling would occur upon $I$ reaching a threshold value, but culling effort would not vary beyond this impulse switch and limited qualitative results can be obtained in such a model.  A limitation of our model is that culling occurs as a continuous, ongoing process, which may not be realistic.  There are advantages and disadvantages to each modeling approach, but considering continuous culling rate functions dependent on $I$, as in system \eqref{model1}, may help to better understand the dynamical consequences of the different culling strategies mentioned in the Introduction.

\begin{table}[h]
\caption{Definition of the variables  in the modeling 
framework} 
\label{ttable:variables}
\centering 
\begin{tabularx}{\textwidth}{>{} lX}
\toprule
Variable/Parameter  & Meaning \\ [0.5ex]
\toprule
\\
$S$ & Susceptible domestic birds  \\ [0.5ex]	
$I$& Birds infected with HPAI \\ [0.5ex]
$\Lambda$ & Birth/recruitment rate of domestic birds \\[0.5 ex] 
$\beta$ & Transmission rate of HPAI among domestic birds \\[0.5 ex]
$\phi(I)$ & Culling rate for susceptible poultry \\ [0.5 ex]
$\psi(I)$ & Culling rate for infected poultry  \\[0.5 ex] 
$c_S$ & Culling coefficient for susceptible poultry \\[0.5 ex]
$c_I$ & Culling coefficient for infected  poultry \\ [0.5 ex]
$\mu$ & Natural death rate of domestic birds \\[0.5 ex]
$\nu$ & HPAI-induced mortality rate for domestic birds \\[0.5 ex]
\bottomrule
\end{tabularx}
\end{table}

\subsection{Analysis of General Model}
The solutions of \eqref{model1} are non-negative for all time $t$. Moreover, there is a positively invariant compact set $$K=\left\{(S,I)\in  \mathbb{R}^2: S \geq 0,\ I \geq 0, S+I \leq \frac{\Lambda}{\mu} \right\}$$ in the non-negative quadrant of $\mathbb{R}^2$ which attracts all solutions of \eqref{model1}.  Indeed, by adding the equations in \eqref{model1}, we see that $N'\leq \Lambda -\mu N$.  Hence, for any solution $\left(S(t),I(t)\right)$,
\begin{align*}
0\leq \limsup_{t\rightarrow\infty}S(t),\limsup_{t\rightarrow\infty}I(t)\leq \limsup_{t\rightarrow\infty}N(t)\leq\frac{\Lambda}{\mu}.
\end{align*}

The system \eqref{model1} has a disease free equilibrium denoted by $\mathcal E_0$, where 
\begin{align}
\label{DFE0}
\mathcal E_0&=\left(\frac{\Lambda}{\mu},0 \right).
\end{align}
  In order to calculate the reproduction number $\mathcal R_0$, we find a threshold condition for the local stability of $\mathcal E_0$. By computing the Jacobian matrix evaluated at $\mathcal E_0$, one can derive the following formula for $\mathcal R_0$ and the following theorem:
\begin{align}
\label{RN}
\mathcal R_0&=\frac{\beta}{\mu+\nu+c_I\psi(0)}
\end{align}

\begin{theorem}
If $\mathcal R_0<1$, then the disease free equilibrium, $\mathcal E_0$, is locally asymptotically stable for the system \eqref{model1}.  If $\mathcal R_0>1$, then $\mathcal E_0$ is unstable.
\end{theorem}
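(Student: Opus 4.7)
The plan is to apply the standard linearized-stability principle at $\mathcal E_0$, assuming (as is tacitly required by the formula \eqref{RN} for $\mathcal R_0$) that $\phi$ and $\psi$ are at least $C^1$ at $I=0$. First I would compute the Jacobian $J(\mathcal E_0)$ of the right-hand side of \eqref{model1}. The only mildly delicate pieces are the incidence $\beta I S/N$ with $N=S+I$ and the infected-culling term $c_I\psi(I)I$. A direct calculation gives
$$\frac{\partial}{\partial S}\!\left(\frac{\beta IS}{S+I}\right)=\frac{\beta I^{2}}{(S+I)^{2}},\qquad \frac{\partial}{\partial I}\!\left(\frac{\beta IS}{S+I}\right)=\frac{\beta S^{2}}{(S+I)^{2}},$$
which at $(S,I)=(\Lambda/\mu,0)$ evaluate to $0$ and $\beta$, respectively; likewise $\partial_I[c_I\psi(I)I]=c_I[\psi(I)+\psi'(I)I]$ equals $c_I\psi(0)$ at $I=0$. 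Consequently the $(2,1)$ entry of $J(\mathcal E_0)$ vanishes and $J(\mathcal E_0)$ is upper triangular.

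Reading off the diagonal then produces the two eigenvalues
$$\lambda_{1}=-\mu-c_{S}\phi(0),\qquad \lambda_{2}=\beta-(\mu+\nu)-c_{I}\psi(0).$$
Since $\mu>0$ and $c_S,\phi(0)\ge 0$, one has $\lambda_{1}<0$ unconditionally, so the stability type of $\mathcal E_0$ is governed entirely by the sign of $\lambda_{2}$. Comparing with \eqref{RN}, $\lambda_{2}<0$ is equivalent to $\beta<\mu+\nu+c_{I}\psi(0)$, i.e.\ to $\mathcal R_{0}<1$, while $\lambda_{2}>0$ is equivalent to $\mathcal R_{0}>1$. The Hartman--Grobman theorem then delivers local asymptotic stability in the first case and instability in the second.

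There is essentially no substantial obstacle in the argument: the computation is routine and the triangular structure of $J(\mathcal E_0)$ bypasses any need for a Routh--Hurwitz analysis. The only point worth flagging is a regularity one, namely that mere continuity of $\phi$ and $\psi$ (as assumed globally in Section~\ref{sec2}) does not quite suffice for the linearization to make sense; differentiability of the culling rates at $I=0$ is implicit in the very appearance of $\psi(0)$ in the threshold. Under that mild assumption the proof reduces to the two displayed computations above.
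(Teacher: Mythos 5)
Your proof is correct and follows essentially the same route as the paper, which simply states that the result (and the formula for $\mathcal R_0$) is obtained ``by computing the Jacobian matrix evaluated at $\mathcal E_0$''; your computation of the triangular Jacobian with eigenvalues $-\mu-c_S\phi(0)$ and $\beta-(\mu+\nu)-c_I\psi(0)$ is exactly the calculation the paper leaves implicit. Your remark on regularity is a fair observation (the paper only assumes continuity of the culling rates, though all of its example rates are smooth), but it does not change the argument.
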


Observe that $\mathcal R_0$ is a function of $c_I\psi(0)$.  If the per capita culling rate function $\psi(I)$ is zero at $I=0$, then the culling rate does not affect the value of $\mathcal R_0$.

Under a certain condition on the per capita culling rate $\psi(I)$, we can obtain the following global result when $\mathcal R_0<1$:
\begin{theorem}\label{GAS}
If $\mathcal R_0< 1$ and $\psi(I)$ satisfies the following condition:
\begin{align} 
\label{Cond1}
\psi(0)=\inf_{I \geq 0}\psi(I),
\end{align}

then $\mathcal E_0$ is globally asymptotically stable.
\end{theorem}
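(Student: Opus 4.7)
The plan is to use the simple Lyapunov function $V(S,I)=I$ on the positively invariant compact set $K$ introduced just before the theorem, and then conclude via LaSalle's invariance principle. The condition \eqref{Cond1} is precisely what is needed to dominate $\psi(I)$ from below by $\psi(0)$, which in turn allows $\mathcal{R}_0$ to be read off from the derivative of $V$.

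Computing $\dot V$ along trajectories of \eqref{model1} and using the elementary bound $S/N \leq 1$ together with $\psi(I)\geq\psi(0)$, one obtains
\[
\dot V \;=\; I\!\left[\frac{\beta S}{N}-(\mu+\nu)-c_I\psi(I)\right] \;\leq\; I\bigl[\beta-(\mu+\nu+c_I\psi(0))\bigr] \;=\; \bigl(\mu+\nu+c_I\psi(0)\bigr)(\mathcal{R}_0-1)\,I.
\]
Since $\mathcal{R}_0<1$, this quantity is $\leq 0$ on $K$ and vanishes only when $I=0$. LaSalle's invariance principle then says that every $\omega$-limit set of a trajectory in $K$ lies in the largest invariant subset $M$ of $E=\{(S,I)\in K : I=0\}$. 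On $E$ the $I$-equation is trivially satisfied and the $S$-equation reduces, using $\phi(0)=0$, to the scalar linear equation $\dot S=\Lambda-\mu S$, whose only bounded orbit in $K$ is the equilibrium $S=\Lambda/\mu$. Hence $M=\{\mathcal{E}_0\}$, so every trajectory in $K$ converges to $\mathcal{E}_0$; combined with the local asymptotic stability already established (previous theorem), this yields global asymptotic stability.

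The main technical obstacle is that the transmission term $\beta IS/N$ is singular at $(S,I)=(0,0)$, so LaSalle cannot be applied naively on all of $K$. This is handled by observing that $\Lambda>0$ forces $N(t)>0$ for every $t>0$, so each trajectory enters the open set $K\cap\{N>0\}$, where the vector field is smooth, after an arbitrarily short time, and one applies LaSalle there. A second subtlety worth flagging is that in order for $\mathcal{E}_0=(\Lambda/\mu,0)$ to actually be an equilibrium and for the reduced dynamics on $E$ to take the clean form $\dot S=\Lambda-\mu S$, one must tacitly assume $\phi(0)=0$, i.e.\ susceptibles are not culled in the absence of disease; this convention is consistent with the formula \eqref{RN} for $\mathcal{R}_0$, which contains $\psi(0)$ but no $\phi$.
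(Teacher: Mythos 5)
Your argument is correct, and its engine is the same differential inequality the paper uses: bounding $S/N\leq 1$ and $\psi(I)\geq\psi(0)$ via \eqref{Cond1} to get $I'\leq\bigl(\mu+\nu+c_I\psi(0)\bigr)(\mathcal R_0-1)I$. Where you diverge is in how the conclusion is drawn. The paper stops there: by a direct comparison argument $I(t)$ decays (exponentially) to $0$ when $\mathcal R_0<1$, and the convergence of $S$ to $\Lambda/\mu$ and the stability part of ``globally asymptotically stable'' are left implicit. You instead package $V=I$ as a Lyapunov function and invoke LaSalle, which buys you the missing pieces: identification of the largest invariant set in $\{I=0\}$ (forcing $S\to\Lambda/\mu$ via the limiting equation $S'=\Lambda-\mu S$), the explicit appeal to local stability for the ``stability'' half, and the observation -- which the paper never states -- that $\mathcal E_0=(\Lambda/\mu,0)$ is an equilibrium of \eqref{model1} only under the tacit convention $c_S\phi(0)=0$, which indeed holds in all three culling scenarios considered later. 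Two small repairs to your write-up: LaSalle should be applied on a compact positively invariant set, not an open one; since $N'\geq\Lambda-CN$ on $K$ (with $C$ a bound for $\mu+\nu+c_S\phi+c_I\psi$ on $[0,\Lambda/\mu]$), the set $K\cap\{N\geq\varepsilon\}$ is compact and forward invariant for small $\varepsilon>0$ and every trajectory with $N(0)>0$ enters it, which legitimizes the argument and also keeps the $\omega$-limit set away from the singular point $(0,0)$. Likewise, for genuinely global statements you should note that solutions starting outside $K$ enter a slightly enlarged compact set $\{N\leq\Lambda/\mu+\delta\}$ in finite time, so the same reasoning applies there. With these touch-ups your proof is a more complete route to the stated conclusion than the paper's own one-line comparison argument, at the cost of heavier machinery for the $I$-component.
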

\begin{proof}
 By the last equation in \eqref{model1}, we have the following inequality:
\begin{align*}
I'&=\frac{\beta SI}{N}-(\mu+\nu+c_I\psi(I))I \\
&\leq \left[\beta-\left(\mu+\nu+c_I\inf_{I\geq 0}\psi(I)\right)\right]I.
\end{align*}
 
Hence $\lim_{t\rightarrow \infty} I = 0$ when $\mathcal R_0 < 1$ and (\ref{Cond1}) is satisfied.
\end{proof}

The direction of the transcritical bifurcation at $\mathcal R_0=1$ determines whether an endemic equilibrium $\mathcal E^*=(S^*,I^*)$ with low levels of $I^*$ exists for $\mathcal R_0>1$ or $\mathcal R_0<1$.  More precisely, the transcritical bifurcation is forward (backward) if there exists a positive equilibrium $\mathcal E^*$ when $R_0>1$ ($\mathcal R_0<1$), in which $I^*\rightarrow 0$ as $\mathcal R_0 \rightarrow 1$ from the right hand side (left hand side).

\begin{theorem}\label{fb}
Consider the system \eqref{model1}.  If
$$\psi'(0)>-\frac{(\mu+\nu+c_I\psi(0))(\mu+c_s\phi(0))}{c_I\Lambda},$$
then there is a forward bifurcation at $\mathcal R_0=1$.  However, if 
$$\psi'(0)<-\frac{(\mu+\nu+c_I\psi(0))(\mu+c_s\phi(0))}{c_I\Lambda},$$
then there is a backward bifurcation at $\mathcal R_0=1$.
\end{theorem}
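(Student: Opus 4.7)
The plan is to treat $\beta$ as the bifurcation parameter (since $\mathcal R_0$ is a positive linear function of $\beta$ by \eqref{RN}, this is equivalent to bifurcating in $\mathcal R_0$), describe the endemic branch implicitly as a curve $\beta=\beta(I)$ through the bifurcation point $(\beta^*,0)$ with $\beta^*=\mu+\nu+c_I\psi(0)$, and read off the direction of the transcritical bifurcation from the sign of $\beta'(0)$. I will abbreviate $A(I):=\mu+\nu+c_I\psi(I)$ and $B(I):=\mu+c_S\phi(I)$. Setting the right-hand sides of \eqref{model1} to zero and using the second equation with $I>0$ gives $\beta S/N=A(I)$; combined with $N=S+I$, this yields $S=IA(I)/(\beta-A(I))$.

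Next, adding the two equilibrium equations kills the incidence term and leaves $\Lambda=B(I)S+A(I)I$. Substituting the expression for $S$ and solving algebraically for $\beta$ produces
\begin{equation*}
\beta(I)=\frac{A(I)\bigl[\Lambda+I(B(I)-A(I))\bigr]}{\Lambda-IA(I)},
\end{equation*}
which is smooth in a right neighborhood of $I=0$ since $\Lambda-IA(I)\to\Lambda>0$, and clearly satisfies $\beta(0)=A(0)=\beta^*$. A direct quotient-rule computation at $I=0$ causes the two $A(0)^2\Lambda$ contributions to cancel, leaving
\begin{equation*}
\beta'(0)=A'(0)+\frac{A(0)B(0)}{\Lambda}=c_I\psi'(0)+\frac{(\mu+\nu+c_I\psi(0))(\mu+c_S\phi(0))}{\Lambda}.
\end{equation*}

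To conclude I interpret the sign. If $\beta'(0)>0$, the curve $(\beta(I),I)$ leaves $(\beta^*,0)$ into $\{\beta>\beta^*\}=\{\mathcal R_0>1\}$ with $I^*\to 0$ as $\mathcal R_0\to 1^+$, which is the forward case; if $\beta'(0)<0$, it enters $\{\mathcal R_0<1\}$, giving the backward case. Solving $\beta'(0)>0$ (respectively $<0$) for $\psi'(0)$ reproduces exactly the two thresholds in the statement. The only real obstacle is the bookkeeping in the quotient-rule step, where one must carry four product-rule terms through and observe the clean cancellation of the $A(0)^2\Lambda$ pieces; as an independent sanity check (and as an alternative route for the entire theorem) one could invoke the Castillo-Chavez--Song center manifold criterion at $(\mathcal E_0,\beta^*)$, computing the sign of the $a$ coefficient from the left and right null eigenvectors of the Jacobian, but the implicit-equation approach is the most direct path to the explicit threshold stated.
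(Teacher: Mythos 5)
Your proposal is correct and follows essentially the same route as the paper: both reduce the equilibrium conditions to the single relation expressing $\beta$ along the endemic branch as a function of $I$ (your $\beta(I)$ is exactly the paper's $F(I)$ in equation \eqref{equalast}), evaluate its derivative at $I=0$ to get $c_I\psi'(0)+\frac{(\mu+\nu+c_I\psi(0))(\mu+c_S\phi(0))}{\Lambda}$, and read the bifurcation direction from its sign. The only cosmetic difference is that you differentiate $\beta(I)$ directly while the paper phrases the same computation as the implicit derivative $dI^*/d\beta=1/F'(0)$; the center-manifold alternative you mention is not needed and not used in the paper.
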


Note that by the theorem above, if $\psi'(0)\geq 0$, then there is a forward bifurcation, in particular it excludes the presence of backward bifurcation.  In other words, for backward bifurcation to occur, it is necessary that $\psi'(0)$ is negative.

\begin{proof}
Consider the equations for endemic equilibria $(S^*,I^*)$, which is derived from the main model (\ref{model1}):  
	\begin{equation}
	\label{EE1}
   \Lambda- (\mu+c_S \phi(I))S =(\mu +\nu + c_I \psi(I)) I.
	\end{equation}
Rearranging the equality above, we obtain 
\begin{equation}
\label{equa1}
S=\frac{\Lambda-(\mu +\nu + c_I \psi(I)) I}{(\mu+c_S \phi(I))}.
\end{equation}
Also notice that an equilibrium condition of the system is
\begin{equation}
\label{bS}
 \beta S=(\mu +\nu + c_I \psi(I))(S+I),
\end{equation}
where $S+I = N$.\\
We denote the function of $I$ representing $S$ by $f(I)$. Then 
\begin{align}
\label{equi3}
\frac{\beta f(I)}{I + f(I)}=\mu +\nu + c_I \psi(I).
\end{align}     
After substituting (\ref{equa1}) into (\ref{equi3}), we get the following equality:
\begin{align}
\label{equalast}
  \beta = \frac{(\mu+c_S\phi(I))(\mu +\nu + c_I \psi(I))I}{\Lambda-(\mu +\nu + c_I \psi(I))I}+\mu +\nu + c_I \psi(I) 
\end{align}
Define $\beta^0$ as the critical value of $\beta$ for which $\mathcal
R_0=1$, i.e.
$$\beta^0 = \mu+\nu +c_I/B.$$
Let $F(I)$ be the right hand side of the equality (\ref{equalast}).  Equation (\ref{equalast}) defines the infected equilibrium $I^*$ implicitly as a function of $\beta$, with $\beta=F(I^*)$.  Notice that $\beta^0=F(0)$.  Hence, the implicit derivative $\frac{dI^*}{d\beta}$ evaluated at $I^*=0,\beta=\beta^0$ gives the direction of the bifurcation at $\mathcal R_0=1$.  If
$$\frac{dI^*}{d\beta}|_{(I^*=0,\beta=\beta^0)}>0,$$
then the bifurcation is forward, and, conversely, the reverse inequality implies backward bifurcation.  Taking this implicit derivative, we obtain:
\begin{align*}
1&=F'(0)\frac{dI^*}{d\beta}|_{(I^*=0,\beta=\beta^0)}
\end{align*}
Hence, the sign of $F'(0)$ determines the direction of bifurcation, where
\begin{align}
\label{fbb}
F'(0)&=c_I\psi'(0)+\frac{(\mu+\nu+c_I\psi(0))(\mu+c_s\phi(0))}{\Lambda}.
\end{align}
The result follows.
\end{proof}
\subsection{Example Culling Rates}
In the following sections, we consider three constitutive forms for the culling rates:
\begin{itemize}
\item {\underline {Mass culling rates}}: 
 In the case of mass culling, the per capita culling rate increases with the number of infected $I$.  Hence, we suppose that $\phi(I)$ and $\psi(I)$ are increasing functions, i.e. $\phi'(I),\psi'(I)\geq 0$ for $I\geq 0$.  Although our results are proved for general increasing per-capita culling rates, we remark that restricting to the case $\phi(0)=\psi(0)=0$ may be biologically reasonable.  The fact that culling is employed after the onset of an outbreak, along with the possible detection problems at low levels of infection, provides motivation for assuming that as $I\rightarrow 0$, the per capita culling rates decrease to zero. If the assumption $\psi(0)=0$ is made, then $\mathcal R_0$ is independent of the culling rates.

 \item {\underline {Modified culling rates}}: Per capita culling rates
$c_S\phi(I)$, $c_S\psi(I)$ are non-monotone functions of infected bird population. They increase with respect to infected bird population once the outbreak occurs, but when infected bird population gets sufficiently large, the per capita culling rates strictly decrease. Decreasing in the per-capita culling rates may occur since resources for carrying out that control measure are limited. As an example, we model with   
\begin{align*} 
 c_S\phi(I)=c_S\frac{I}{A+I^2},\quad c_I\psi(I)=c_I\frac{I}{B+I^2},
\end{align*} 
  where $A,B >0$.  Because of the reactionary nature of culling and detection problems mentioned in the above paragraph, the assumption that $\phi(0)=\psi(0)=0$ is reasonable, and $\mathcal R_0$ does not depend on culling.
     
\item  {\underline {Selective culling rates}}: For the case of selective culling, we assume that the ability of farmers to discriminate infected from susceptible birds is perfect and only infected birds are culled. We let $c_S\phi(I)= 0$ and $c_I\psi(I)=\displaystyle c_I\frac {1}{B+I}$, with $B>0$. Hence the per-capita culling rate $c_I\psi(I)=\displaystyle c_I\frac {1}{B+I}$ decreases as the number of infected birds increases. Saturation in total number of infected birds culled occurs as a result of limited culling effort. Moreover, notice that $\psi(0)>0$, and hence $\mathcal R_0$ depends on $c_I$. In the small family-run farms in which selective culling is utilized, there might be better detection of H5N1 when the number of infected is small. Therefore, we are interested in exploring the dynamics in the case where $\psi(0)>0$.  Selective culling may be employed in poultry because of socio-economic impact of mass culling.
\end{itemize}

\section{Mass Culling}
In this section we consider and analyze the model \eqref{model1} with increasing per-capita culling rates $\psi(I), \ \phi(I) :$ 
\begin{align}
\label{mass} 
\psi'(I)\geq 0, \ \phi'(I) \geq 0, \ \forall \ I \geq 0. 
\end{align}
Recall the reproduction number 
\begin{equation}
\mathcal R_0 = \frac{\beta }{\mu+\nu + c_I \psi(0)}
\end{equation}
and the disease free equilibrium $\mathcal E_0=(\Lambda/\mu,0)$. Now we want to show existence of a unique endemic equilibrium when $\mathcal R_0>1$ and in addition, we will show that when $\mathcal R_0<1,$ the system with mass culling (\ref{mass}) does not have an endemic equilibrium.
By the equation (\ref{equalast}), we have
$$\Lambda - (\mu + \nu + c_I  I)
I = \D \frac{ (\mu+c_S \phi(I))(\mu + \nu + c_I\psi( I))
  I}{\beta -(\mu + \nu + c_I \psi (I)) } .$$
Let the left and right hand side of this equation be $F(I)$ and $G(I)$, respectively. $F(I)$ is a decreasing function of $I$ with $F(0)=\Lambda>0$ and $\lim_{I \rightarrow \infty} F(I)= - \infty $.  Notice that $G(0)=0$, and if $\mathcal R_0>1,$ then either $G(I)$ has a vertical asymptote at a point $I_c \in (0, \infty)$: $\beta -(\mu + \nu + c_I \psi (I_c))=0$ or  $\beta -(\mu + \nu + c_I \psi (I))>0, \ \forall I>0.$ If $\beta -(\mu + \nu + c_I \psi (I))>0, \ \forall I>0,$ then by taking derivative of $G(I)$ with respect to $I$, one can easily see that $G'(I)>0$ for all $I$ since $\phi'(I),\psi'(I)\geq 0$. Then the equality above has a unique positive solution since $F(0)> G(0).$ Now suppose there exists $I >0: \ \beta -(\mu + \nu + c_I \psi (I))=0$. Let  $I_c$ be the minimum positive root such that $\beta -(\mu + \nu + c_I \psi (I_c))=0.$ For any endemic equilibrium $(S^*, I^*)$ of the system (\ref{model1}), $S^*$ takes positive value only if $ I^* \in \left[0, I_c\right)$, by the equilibrium condition (\ref{bS}). In this case, since $F(0)>G(0),$ it is enough to show $G(I)$ is increasing on the intervial $ \left[0, I_c\right)$ and $\lim_{I\rightarrow I_c^-} G(I)=+\infty$, which is easy to see. Therefore if $\mathcal R_0>1$, then there exists a unique endemic equilibrium. On the other hand, if $\mathcal R_0<1,$ then the same equilibrium condition (\ref{bS}) excludes the possibility of presence of any endemic equilibrium for mass culling rates since $\psi'(I)\geq 0$.  \\
To consider the stability of the endemic equilibrium, we look at the Jacobian
\begin{equation}
\label{J}
J=
\left(
\begin{array}{ll}
-\D\frac{\beta I}{N}  +\D\frac{\beta IS}{N^2} - \mu - c_S \phi(I)
                 & -\D\frac{\beta S}{N} +\D\frac{\beta IS}{N^2}
- c_SS \phi'(I)\\
\\
\D\frac{\beta I}{N}  -\D\frac{\beta IS}{N^2} & 
        \D\frac{\beta S}{N} -\D\frac{\beta IS}{N^2} -
        (\mu+\nu+c_I (\psi(I)+ \psi'(I)I))\\
\end{array}
\right)
\end{equation}
To see that the endemic equilibrium is locally asymptotically stable, we look
at the signs of the entries of $J$ when $J$ is evaluated at the endemic
equilibrium.
Simplifying entry $j_{11}$ we can show that $j_{11}<0$. Similarly entry
$j_{21}>0$ and entry $j_{12}<0$.
By an equilibrium condition for the system  \eqref{model1}, we obtain that
entry
$j_{22}<0$. Hence, $Tr\ J = j_{11}+j_{22} <0$. Furthermore, $Det\ J =
j_{11}j_{22}-j_{12}j_{21}>0$.
Therefore the endemic equilibrium is locally asymptotically stable whenever it
exists. We summarize these results in the following Theorem:

\begin{theorem} With mass culling (\ref{mass}), the system \eqref{model1} has a unique endemic equilibrium,
  if $\mathcal R_0>1$. However, when $\mathcal R_0<1, $ the system does not exhibit any endemic equilibrium. Moreover the endemic equilibrium $\mathcal
E^*$ is locally asymptotically stable whenever it exists.
\end{theorem}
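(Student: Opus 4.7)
The plan is to split the theorem into three claims and attack each in turn: existence/uniqueness of an endemic equilibrium when $\mathcal R_0>1$, nonexistence when $\mathcal R_0<1$, and local asymptotic stability of any endemic equilibrium that does exist. For the first two, I will reduce the equilibrium system to a single scalar equation in $I$ (using \eqref{EE1} to write $S$ in terms of $I$ and then substituting into \eqref{bS}) and compare the two sides as functions of $I$.

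For the first claim, define $F(I) := \Lambda - (\mu+\nu+c_I\psi(I))I$ and
\[
G(I) := \frac{(\mu+c_S\phi(I))(\mu+\nu+c_I\psi(I))\,I}{\beta-(\mu+\nu+c_I\psi(I))},
\]
so that equilibria correspond to roots of $F(I)=G(I)$. Under the mass-culling hypothesis $\psi',\phi'\ge 0$, $F$ is strictly decreasing with $F(0)=\Lambda>0$ and $F(I)\to -\infty$. The function $G$ satisfies $G(0)=0$. I would then split into two subcases: either $\beta-(\mu+\nu+c_I\psi(I))>0$ for all $I\ge 0$, in which case the numerator of $G$ is increasing and the denominator is positive and decreasing so $G$ is strictly increasing, giving a unique crossing with $F$; or there is a smallest $I_c>0$ with $\beta=\mu+\nu+c_I\psi(I_c)$, and the equilibrium relation \eqref{bS} forces any biologically meaningful endemic equilibrium to lie in $[0,I_c)$. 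On that interval $G$ is still strictly increasing and $\lim_{I\to I_c^-}G(I)=+\infty$, so again exactly one crossing exists. For the second claim, if $\mathcal R_0<1$ then $\beta<\mu+\nu+c_I\psi(0)\le \mu+\nu+c_I\psi(I)$ for every $I\ge 0$ (by monotonicity of $\psi$), but \eqref{bS} with $S,I>0$ forces $\beta S=(\mu+\nu+c_I\psi(I))(S+I)>(\mu+\nu+c_I\psi(I))S$, contradicting the inequality. Hence no endemic equilibrium exists.

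For local asymptotic stability, I would use the Jacobian \eqref{J} evaluated at an endemic equilibrium $\mathcal E^*=(S^*,I^*)$ and check the signs of its entries. The off-diagonal analysis is immediate: $j_{12}<0$ because $\beta IS/N^2 \le \beta S/N$ and the extra term $-c_S S\phi'(I)$ is nonpositive, and $j_{21}=(\beta I/N)(1-S/N)=\beta I^2/N^2>0$. For $j_{11}$, I would combine the two $\beta$-terms to get $-\beta I^2/N^2-\mu-c_S\phi(I)<0$. The key simplification is $j_{22}$: using the equilibrium identity $\beta S/N=\mu+\nu+c_I\psi(I)$ from \eqref{bS}, the first and third terms cancel and we are left with $j_{22}=-\beta IS/N^2-c_I\psi'(I^*)I^*<0$. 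Thus $\text{tr}(J)<0$ and, since all four entries have the sign pattern of a cooperative-looking $2\times 2$ with negative diagonal and opposite-signed off-diagonals, the product $j_{12}j_{21}<0$ gives $\det J=j_{11}j_{22}-j_{12}j_{21}>0$, so both eigenvalues have negative real part.

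The main obstacle I anticipate is the bookkeeping in the uniqueness step: one must ensure that the implicit asymptote at $I_c$ is handled correctly and that endemic equilibria lying beyond $I_c$ are genuinely ruled out (which is why invoking \eqref{bS} to restrict to $I^*<I_c$ is essential). Everything else is algebraic simplification that is routine once the equilibrium identity $\beta S/N = \mu+\nu+c_I\psi(I)$ is used systematically in the Jacobian.
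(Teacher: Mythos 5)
Your proposal is correct and follows essentially the same route as the paper: the same decomposition of the equilibrium equation into the decreasing left-hand side $F(I)$ and the increasing right-hand side $G(I)$ (with the same two-case treatment of the possible vertical asymptote at $I_c$ and the use of \eqref{bS} to restrict $I^*$ to $[0,I_c)$), the same monotonicity argument excluding endemic equilibria when $\mathcal R_0<1$, and the same sign analysis of the Jacobian entries (with $j_{11},j_{22}<0$, $j_{12}<0$, $j_{21}>0$ giving $\mathrm{Tr}\,J<0$ and $\mathrm{Det}\,J>0$). You merely spell out the algebraic simplifications (e.g.\ $j_{22}=-\beta IS/N^2-c_I\psi'(I^*)I^*$ via $\beta S/N=\mu+\nu+c_I\psi(I)$) that the paper leaves implicit.
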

Furthermore, the following result can be established by Theorem \ref{GAS}:
\begin{theorem} If $\mathcal R_0<1$, then the disease-free equilibrium $\mathcal E_0=\left(\Lambda/\mu,0 \right)$ is
  globally stable.
\end{theorem}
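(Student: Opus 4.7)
The plan is to deduce this corollary directly from Theorem \ref{GAS}, which already establishes global asymptotic stability of $\mathcal E_0$ under the hypothesis $\psi(0) = \inf_{I \geq 0} \psi(I)$ whenever $\mathcal R_0 < 1$. So the only work is to verify that the mass culling hypothesis supplies this infimum condition on $\psi$.

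First I would observe that in the mass culling regime defined by (\ref{mass}), we have $\psi'(I) \geq 0$ for all $I \geq 0$, so $\psi$ is non-decreasing on $[0,\infty)$. This immediately yields $\psi(0) \leq \psi(I)$ for every $I \geq 0$, and therefore
\begin{equation*}
\psi(0) \;=\; \inf_{I \geq 0} \psi(I),
\end{equation*}
which is exactly condition (\ref{Cond1}). Applying Theorem \ref{GAS} with this verified hypothesis and the assumption $\mathcal R_0 < 1$ then gives the desired global asymptotic stability of $\mathcal E_0 = (\Lambda/\mu, 0)$.

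There is no real obstacle here; the statement is essentially a specialization of the general criterion. I would note, however, that this also dovetails with the preceding theorem: under mass culling we have already shown that no endemic equilibrium exists when $\mathcal R_0 < 1$, so combining nonexistence of endemic equilibria with global attraction of $\mathcal E_0$ gives a complete qualitative picture in the subthreshold regime. I would include a single short paragraph in the paper pointing out this reduction rather than reproving anything from scratch.
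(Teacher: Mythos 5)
Your proposal is correct and matches the paper's own route: the paper simply notes that this theorem follows from Theorem \ref{GAS}, and your verification that $\psi'(I)\geq 0$ under mass culling gives $\psi(0)=\inf_{I\geq 0}\psi(I)$, i.e.\ condition \eqref{Cond1}, is precisely the missing (trivial) step. Nothing further is needed.
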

Since the endemic equilibrium is unique and locally stable, it is reasonable  to
expect that it is globally stable. Indeed, we have the following result
\begin{lemma} With mass culling (\ref{mass}), the system \eqref{model1} has no periodic orbits in the first quadrant.
\end{lemma}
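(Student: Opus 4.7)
The plan is to apply the Bendixson–Dulac criterion with a carefully chosen Dulac function. Writing the vector field as $S' = f(S,I)$, $I' = g(S,I)$ with
\begin{align*}
f(S,I) &= \Lambda - \frac{\beta IS}{S+I} - \mu S - c_S\phi(I) S,\\
g(S,I) &= \frac{\beta IS}{S+I} - (\mu+\nu)I - c_I\psi(I) I,
\end{align*}
I would try the standard Dulac function for SI-type models with frequency-dependent incidence, namely $B(S,I) = \frac{1}{SI}$, and compute $\partial_S(Bf) + \partial_I(Bg)$ on the open first quadrant $\{S>0,\ I>0\}$.

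First, $Bf = \frac{\Lambda}{SI} - \frac{\beta}{S+I} - \frac{\mu}{I} - \frac{c_S\phi(I)}{I}$, so
\[
\partial_S(Bf) = -\frac{\Lambda}{S^2 I} + \frac{\beta}{(S+I)^2}.
\]
Similarly, $Bg = \frac{\beta}{S+I} - \frac{\mu+\nu}{S} - \frac{c_I\psi(I)}{S}$, so
\[
\partial_I(Bg) = -\frac{\beta}{(S+I)^2} - \frac{c_I\psi'(I)}{S}.
\]
The key cancellation is between the two $\beta/(S+I)^2$ terms, which is precisely what makes this choice of $B$ work for frequency-dependent incidence. Adding the two expressions,
\[
\partial_S(Bf) + \partial_I(Bg) = -\frac{\Lambda}{S^2 I} - \frac{c_I\psi'(I)}{S}.
\]
Under the mass culling hypothesis \eqref{mass} we have $\psi'(I)\geq 0$, and since $\Lambda,\mu,c_I>0$, this quantity is strictly negative throughout the open first quadrant. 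By the Bendixson–Dulac criterion, there can be no closed orbit contained in the open first quadrant.

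To finish, I would rule out periodic orbits touching the boundary: on $I=0$ the dynamics reduces to $S' = \Lambda - \mu S$, which has no periodic orbits and is invariant, and on $S=0$ one has $S' = \Lambda > 0$, so no orbit can lie in or cross into $\{S=0\}$ in a periodic manner. Combined with the positively invariant region $K$ established earlier, this completes the argument. The only subtle point, and the step I would double-check carefully, is the cancellation of the $\beta/(S+I)^2$ contributions; once that is in hand, the mass culling sign condition $\psi'(I)\geq 0$ does the rest, and no real obstacle remains.
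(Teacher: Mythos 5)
Your proof is correct and follows essentially the same route as the paper: the Dulac function $1/(SI)$ with the cancellation of the $\beta/(S+I)^2$ terms, yielding the divergence $-\frac{\Lambda}{S^2 I}-\frac{c_I\psi'(I)}{S}<0$ on the open first quadrant, exactly as in the paper's computation. Your extra remark about orbits touching the boundary is a harmless addition not present in (and not needed by) the paper's argument.
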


\begin{proof}
We apply Dulac's criterium with Dulac function $ D =1/SI$. Let $X$ be the
open first quadrant and $f_1$, $ f_2$ be defined as follows:
\begin{align*}
\begin{cases}
f_1 & \hspace{-4mm}= \Lambda -\displaystyle\frac{\beta I S}{N} - \mu
S-c_S \phi(I)S  \vspace{1.5mm},\\
f_2 & \hspace{-4mm}=\displaystyle\frac{\beta I S}{N} -  (\mu +
\nu)I - c_I \psi(I) I\vspace{1.5mm}\\
\end{cases}
\end{align*}
Then 
$$\D\frac{\partial D f_1}{\partial S} + \frac{\partial D f_2}{\partial I}
= -\left(\frac{\Lambda}{S^2 I} +\frac{c_I \psi'(I)}{S}\right) <0, \ \forall \ I \geq 0.$$
Thus, Dulac's criterium implies that there are no periodic orbits in the first quadrant.
\end{proof}

\begin{theorem} With mass culling (\ref{mass}) the system \eqref{model1} has  a
 unique endemic equilibrium which is globally stable if $\mathcal R_0>1$.
\end{theorem}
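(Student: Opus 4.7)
My plan is to close the proof by invoking the Poincar\'e--Bendixson theorem inside the positively invariant compact set $K$. Every forward orbit of \eqref{model1} eventually enters $K$, so its $\omega$-limit set is non-empty, compact, connected, and invariant. By the Poincar\'e--Bendixson trichotomy for planar flows, such a set must be either a single equilibrium, a periodic orbit, or a heteroclinic/homoclinic polycycle formed from finitely many equilibria and connecting orbits. The preceding lemma eliminates the periodic-orbit case via Dulac's criterion, so only the first and third possibilities remain.

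There are exactly two equilibria in $K$: the disease-free equilibrium $\mathcal E_0$ and the endemic equilibrium $\mathcal E^*$ (unique under mass culling when $\mathcal R_0>1$). Since $\mathcal E^*$ was just shown to be locally asymptotically stable, no orbit can leave a sufficiently small neighborhood of $\mathcal E^*$; in particular $\mathcal E^*$ cannot serve as a vertex of any polycycle. A polycycle built only from $\mathcal E_0$ would have to be a homoclinic loop to $\mathcal E_0$, and I will rule this out together with the possibility that an orbit with $I(0)>0$ accumulates at $\mathcal E_0$.

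Evaluating the Jacobian \eqref{J} at $\mathcal E_0$ gives an upper-triangular matrix with eigenvalues $-(\mu+c_S\phi(0))<0$ and $\beta-(\mu+\nu+c_I\psi(0))>0$ (the latter because $\mathcal R_0>1$), so $\mathcal E_0$ is a hyperbolic saddle. The stable eigenvector is $(1,0)^T$, and $\{I=0\}$ is a forward-invariant one-dimensional submanifold through $\mathcal E_0$ on which the dynamics $S'=\Lambda-(\mu+c_S\phi(0))S$ drives every orbit to $\mathcal E_0$. Hence $\{I=0\}$ coincides with the stable manifold $W^s(\mathcal E_0)$ in the physical region. By uniqueness of the stable manifold of a hyperbolic saddle, no orbit starting in the open first quadrant $\{S\ge 0,\ I>0\}$ can converge to $\mathcal E_0$, and the same fact precludes any homoclinic loop to $\mathcal E_0$ lying in $\{I>0\}$ since the return branch would have to ride along $W^s(\mathcal E_0)\subset\{I=0\}$.

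Assembling these observations, the $\omega$-limit set of any orbit with $I(0)>0$ is nonempty, contains no periodic orbit and no polycycle, and cannot reduce to $\mathcal E_0$, so it must equal $\{\mathcal E^*\}$; combined with the local asymptotic stability of $\mathcal E^*$ this yields global asymptotic stability on the biologically relevant region. The one subtle step---and the expected obstacle---is the identification of $W^s(\mathcal E_0)$ with the invariant axis $\{I=0\}$, which is what prevents interior orbits from accumulating at the boundary saddle; everything else follows mechanically from ingredients already in hand.
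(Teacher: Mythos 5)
Your proposal is correct and follows essentially the same route as the paper: Poincar\'e--Bendixson on the bounded solutions, the Dulac lemma to exclude periodic orbits, the observation that $\mathcal E_0$ is a hyperbolic saddle whose stable manifold is the $S$-axis so interior orbits cannot accumulate there, and local stability of the unique endemic equilibrium to conclude convergence. If anything, you spell out more carefully than the paper the exclusion of polycycles and the explicit eigenvalue computation at $\mathcal E_0$, which the paper only states informally.
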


\begin{proof} To obtain this result one needs to apply the Poincar\'{e}-Bendixson Theorem.  First, recall that all solutions of  system \eqref{model1}
  are bounded. Indeed we have
$$\limsup_t (S+I) \le \limsup_t N \le \D \frac{\Lambda}{\mu}.$$
Since when $\mathcal R_0>1$  the disease-free equilibrium is a unstable saddle
with
stable manifold along the $S$ axis, then solutions that start from
$I(0)=0$ will stay on this stable manifold and converge to the disease-free
equilibrium.
If, however, $I(0)>0$, then the solution is repelled by the disease-free
equilibrium
and the omega limit set must contain another equilibrium (since there are no
periodic orbits). The only option is the endemic equilibrium. Once the
solution gets close to the endemic equilibrium which is locally stable, it
will attract the solution. Thus every solution for which $I(0)>0$ converges
to the endemic equilibrium. Hence, the endemic equilibrium is globally stable.
\end{proof}

\begin{figure}
\begin{center}
\includegraphics[width=9cm,height=5cm]{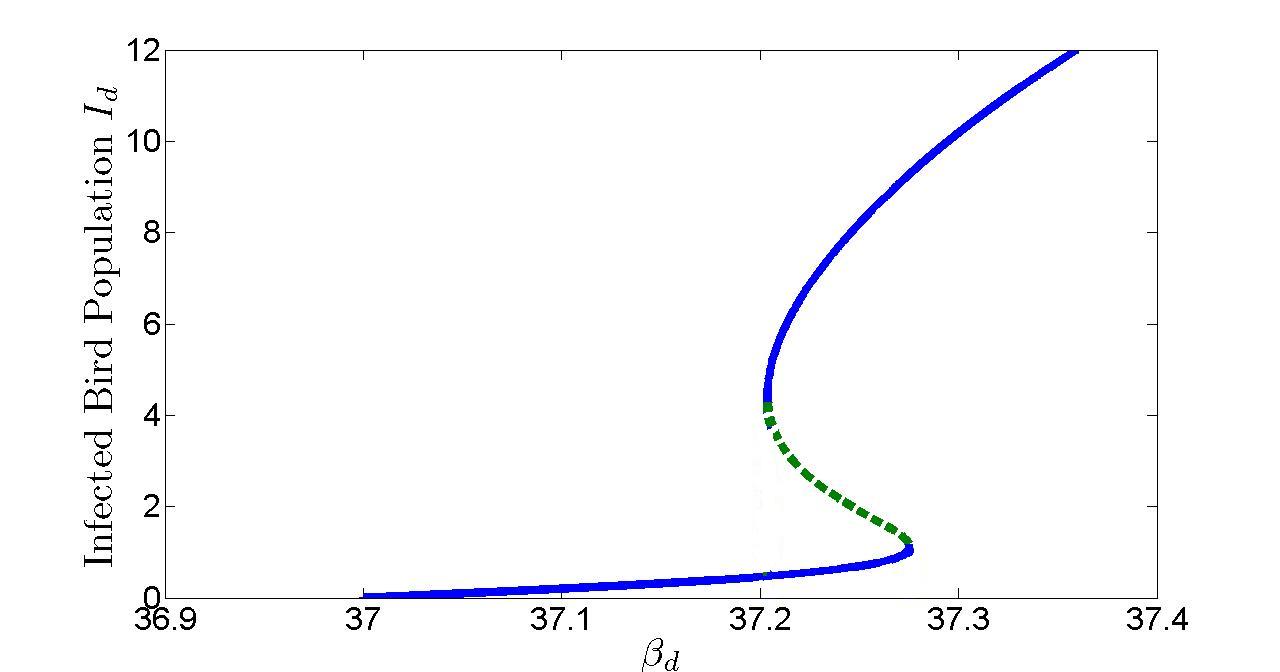}
\end{center}
\caption{\label{fig:hysteresis}
Hysteresis caused by the per capita culling rates $c_S\phi(I)=c_S\frac{I}{A+I^2}$ and $c_I\psi(I)=c_I\frac{I}{B+I^2}$ in the model
\eqref{model1}. Parameter values are: $A = 1$, $B = 1$,
$\nu = 0.1*365$,
$\mu = 0.5$,
$c_S = 0.5$,
$c_I = 0.5$,
$\Lambda = 1200$.}
\end{figure} 

\section{Modified culling}
Mass culling has proved effective for isolated outbreaks, but is a less successful control measure in more widespread outbreaks. If the number of infected becomes large enough, disease control authorities can rapidly become overwhelmed through lack of resources. In the case of widespread infection, it has been suggested that disease control should be shifted from a traditional culling approach to a modified strategy which entails culling of only infected and high-risk in-contact poultry along with complimentary measures such as vaccination. This shift in strategy may mitigate the economic costs of mass culling. Modified culling, therefore, comprises of sustained culling in isolated outbreaks and a decrease in total culling effort during a widespread outbreak \cite{FAO0, FAO00}. This motivates us to explore the dynamical consequences of considering per capita culling rates $c_S\phi(I)$ and $c_I\psi(I)$ which are proportional to $I$ when $I$ is small, but decreasing when $I$ is large. We assume that the per capita culling rates $c_S\phi(I)$, $c_I\psi(I)$ in system (\ref{model1}) satisfy the following conditions:
\begin{align}
\label{COND2}
 & i) \ \psi(0)=0, \phi(0)=0. \nonumber \\
 & ii) \ \psi'(0)>0, \phi'(0)>0. \\
 & iii) \text{ Once $I$ is sufficiently large, }  \psi(I), \phi(I) \text{ strictly decrease. } \nonumber 
\end{align}
 Note that our model system \eqref{model1} considers the scenario where no other control measures beside culling are implemented. 
  
 From the equation (\ref{RN}), the reproduction number in the case of modified culling is 
       \begin{equation}
\mathcal R_0 = \frac{\beta }{\mu+\nu}.
\end{equation}

From Theorem \ref{GAS}, we have the following result about the global stability of $\mathcal E_0$ in the case of modified culling:

\begin{theorem} If $\mathcal R_0< 1$, then the disease free equilibrium $\mathcal E_0$ is globally asymptotically stable. 
  \end{theorem}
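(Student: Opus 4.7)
The plan is to invoke Theorem \ref{GAS} directly; the modified culling model is a special instance of the general model \eqref{model1}, and Theorem \ref{GAS} already asserts that the disease-free equilibrium is globally asymptotically stable whenever $\mathcal R_0 < 1$ and $\psi(0) = \inf_{I \geq 0}\psi(I)$. The whole task therefore reduces to checking the infimum hypothesis for the culling rate under assumptions \eqref{COND2}.

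To verify it, I would note that condition (i) of \eqref{COND2} forces $\psi(0) = 0$. The general modeling framework in Section \ref{sec2} also requires the culling function $\psi$ to be a nonnegative continuous function on $[0,\infty)$, so $\inf_{I \geq 0}\psi(I) \geq 0$. Combined with $\psi(0) = 0$, this yields $\psi(0) = 0 = \inf_{I \geq 0}\psi(I)$, and the hypothesis of Theorem \ref{GAS} is satisfied. Applying Theorem \ref{GAS} under $\mathcal R_0 = \beta/(\mu+\nu) < 1$ then delivers the conclusion.

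There is no genuine obstacle here. The nontrivial work was already done in Theorem \ref{GAS} (via the comparison $I' \le [\beta - (\mu+\nu+c_I\inf_{I \ge 0}\psi(I))]I$ plus standard linear asymptotics), and the only new observation needed for modified culling is that $\psi(0) = 0$ coincides with the global infimum of $\psi$ simply because $\psi$ is nonnegative. It is worth pointing out that neither condition (ii) (positive slope at $0$) nor condition (iii) (eventual decrease) of \eqref{COND2} plays any role in this particular argument; they will matter later, when analyzing the bifurcation structure and the existence of endemic equilibria. What is being used here is only the combination $\psi \ge 0$ and $\psi(0) = 0$, which pins the infimum to the origin and lets the general global stability theorem pass through.
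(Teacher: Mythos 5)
Your proposal is correct and matches the paper's own treatment: the paper simply cites Theorem \ref{GAS} for this result, implicitly using exactly the observation you make explicit, namely that under \eqref{COND2} one has $\psi(0)=0=\inf_{I\geq 0}\psi(I)$ because $\psi$ is nonnegative. Your verification of the infimum hypothesis is a slightly more careful write-up of the same argument, and your remark that conditions (ii) and (iii) of \eqref{COND2} are not needed here is accurate.
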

  
 Next, we obtain the following result:
  
 \begin{figure}
\begin{center}
\includegraphics[width=12cm,height=9cm]{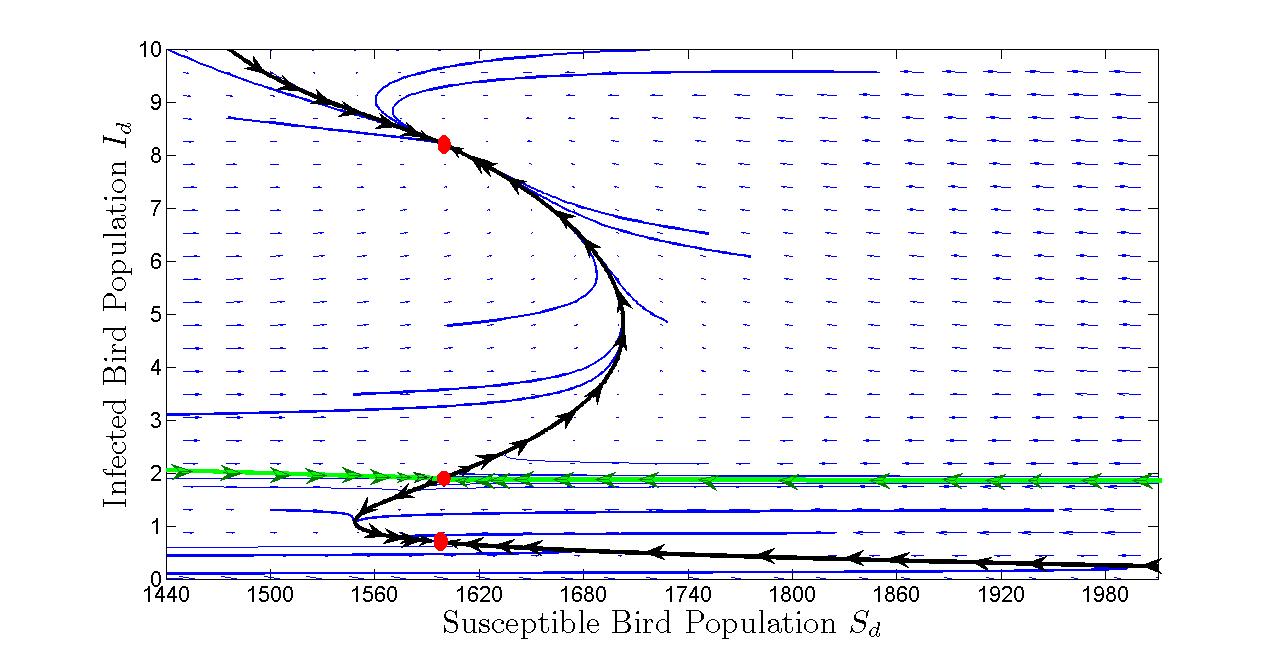}
\end{center}
\caption{\label{fig:phaseplan} Phase portrait of the model
\eqref{model1} with the per capita culling rates $c_S\phi(I)=c_S\frac{I}{A+I^2}$ and $c_I\psi(I)=c_I\frac{I}{B+I^2}$. Parameter values are: $A = 1$, $B = 1$,
$\nu = 0.1*365$,
$\mu = 0.5$,
$c_S = 0.5$,
$c_I = 0.5$,
$\beta = 37.25$,
$\Lambda = 1200$.}
\end{figure}

 \begin{theorem} For the system with modified culling (\ref{COND2}), there is always a forward bifurcation at $\mathcal R_0=1$.
\end{theorem}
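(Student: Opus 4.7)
The plan is to derive this result as a direct application of Theorem \ref{fb}, which we already have in hand. Recall that Theorem \ref{fb} states that a forward bifurcation occurs at $\mathcal R_0=1$ precisely when
\[
\psi'(0) > -\frac{(\mu+\nu+c_I\psi(0))(\mu+c_S\phi(0))}{c_I \Lambda}.
\]
So the entire proof reduces to checking that this inequality holds under the modified culling assumptions \eqref{COND2}.

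First, I would substitute the specific assumptions from \eqref{COND2} into the right-hand side of the threshold inequality. Since $\psi(0)=0$ and $\phi(0)=0$ by condition (i), the right-hand side simplifies to
\[
-\frac{(\mu+\nu)\,\mu}{c_I \Lambda}.
\]
All parameters $\mu,\nu,c_I,\Lambda$ are strictly positive, so this quantity is strictly negative. Next, by condition (ii), $\psi'(0) > 0$. Therefore
\[
\psi'(0) > 0 > -\frac{(\mu+\nu)\,\mu}{c_I \Lambda},
\]
so the strict inequality of Theorem \ref{fb} for forward bifurcation is satisfied. This gives the conclusion with essentially no computation.

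There is no real obstacle here: condition (iii) in \eqref{COND2} (eventual monotone decrease of $\psi,\phi$) plays no role for the local bifurcation direction at $\mathcal R_0=1$, since the direction depends only on the values of $\psi$, $\phi$, and $\psi'$ at $I=0$. The only thing to be careful about is to note explicitly that $\psi'(0)>0$ strictly (not merely $\ge 0$), so that we land strictly above the threshold and produce a genuine forward transcritical bifurcation rather than a degenerate case; but this is exactly what condition (ii) guarantees. Hence the theorem follows immediately from Theorem \ref{fb}.
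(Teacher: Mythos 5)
Your proposal is correct and follows essentially the same route as the paper: the paper's proof likewise observes that condition (ii) of \eqref{COND2} gives $\psi'(0)>0$, which exceeds the (negative) threshold in Theorem \ref{fb}, so the forward bifurcation is immediate. Your version simply spells out the simplification of the threshold using $\psi(0)=\phi(0)=0$, which the paper leaves implicit.
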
 

\begin{proof} For the system with modifying culling (\ref{COND2}), we have $\psi'(0)>0.$ Then the proof follows the theorem ($\ref{fb}$). 
\end{proof}
  
As an example for per capita culling rates satisfying the conditions in (\ref{COND2}), we consider the functions  
\begin{equation}
\label{modifiedcullrate} 
 c_S\phi(I)=c_S\frac{I}{A+I^2},\quad c_I\psi(I)=c_I\frac{I}{B+I^2},
\end{equation} 
where $A,B >0$.\\  
To analyze the resulting system with the per-capita culling rates above in a more convenient way, we take $A=B$. Then after simplification, from the equality $(\ref{EE1})$, we obtain a fifth degree polynomial in $I$. Then we study the number of positive roots of the fifth degree polynomial: 
$$F(I)= xI^5 + y I^4 + zI^3 + lI^2 + mI + n,$$
where $$x=\beta(-\beta+\nu)(\mu+\nu), \ y=(n/ A^2)+f, \ z= 2Ax+g,\ l= (2n/A)+Af,$$
			$$m=A( Ax-\beta \Lambda c_I), \ n=A^2 \Lambda \beta (\mu+\nu) (\mathcal R_0- 1)$$
and \\
      $$g=-\beta \Lambda c_I-\beta c_I c_S+\beta c_I^2, \ f=-\beta^2c_I-\beta c_S(\mu+\nu)+\beta c_I(\mu+\nu)+c_I\nu \beta $$ \\
Next, we obtain the following result:
    
\begin{proposition} \label{prop}
  If $\mathcal R_0>1$, then there is either a unique or three or five positive endemic equilibria, if all equilibria are simple roots.
\end{proposition}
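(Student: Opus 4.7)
The plan is to prove the proposition via a simple sign-and-parity argument on the polynomial $F(I)$, combined with the Intermediate Value Theorem. The idea is that under the hypothesis $\mathcal R_0 > 1$ one has $F(0) > 0$, while the leading coefficient $x$ is negative, so $F$ changes sign on $(0,\infty)$ and the number of simple positive zeros must be odd; since the degree is five, this number lies in $\{1,3,5\}$.

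First I would unpack the sign of the constant term. From the given formula
\begin{equation*}
n = A^2 \Lambda \beta (\mu+\nu)(\mathcal R_0 - 1),
\end{equation*}
and since $A, \Lambda, \beta, \mu+\nu > 0$, the hypothesis $\mathcal R_0 > 1$ gives $F(0) = n > 0$. Next I would analyze the leading coefficient
\begin{equation*}
x = \beta(\nu - \beta)(\mu+\nu).
\end{equation*}
Because $\mathcal R_0 > 1$ is equivalent to $\beta > \mu + \nu$, and in particular $\beta > \nu$, the factor $(\nu - \beta)$ is negative while $\beta$ and $\mu+\nu$ are positive; hence $x < 0$. Consequently $F(I) \to -\infty$ as $I \to \infty$.

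With these two sign facts in hand, the Intermediate Value Theorem already guarantees at least one positive real root of $F$. More is true: under the standing hypothesis that every root of $F$ is simple, a sign change of $F$ on $(0,\infty)$ can only occur at a simple zero with odd-order vanishing, so the number of positive zeros equals the number of sign changes of $F$ on that half-line. Because $F(0^+) > 0$ and $F(I) < 0$ for all sufficiently large $I$, this number of sign changes is odd. Since $\deg F = 5$, the total number of (real) roots counted with multiplicity is at most $5$, so the number of simple positive roots lies in $\{1,3,5\}$, which is exactly the claim.

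The only thing requiring a bit of care — and in this case it is not really an obstacle — is verifying that $x < 0$ under $\mathcal R_0 > 1$, since this is what makes $F(+\infty) = -\infty$ and hence drives the parity argument. Everything else is a direct consequence of the factored form of $n$ and of the elementary fact that a real polynomial with only simple zeros changes sign exactly at each real zero. Note that this argument counts positive roots of $F$, but since each endemic equilibrium of \eqref{model1} corresponds to such a root via \eqref{EE1}, and no roots at $I=0$ occur (as $F(0)=n>0$), these are exactly the positive endemic equilibria.
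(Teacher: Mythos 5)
Your argument is correct and follows essentially the same route as the paper's own proof: sign of the constant term $n>0$ under $\mathcal R_0>1$, negativity of the leading coefficient $x$ forcing $F(I)\to-\infty$, and then a degree-five parity count of simple positive roots. You merely make explicit, via the odd number of sign changes on $(0,\infty)$, the ``geometric argument'' the paper leaves informal.
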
 
\begin{figure}
\begin{center}
\includegraphics[width=10cm,height=5cm]{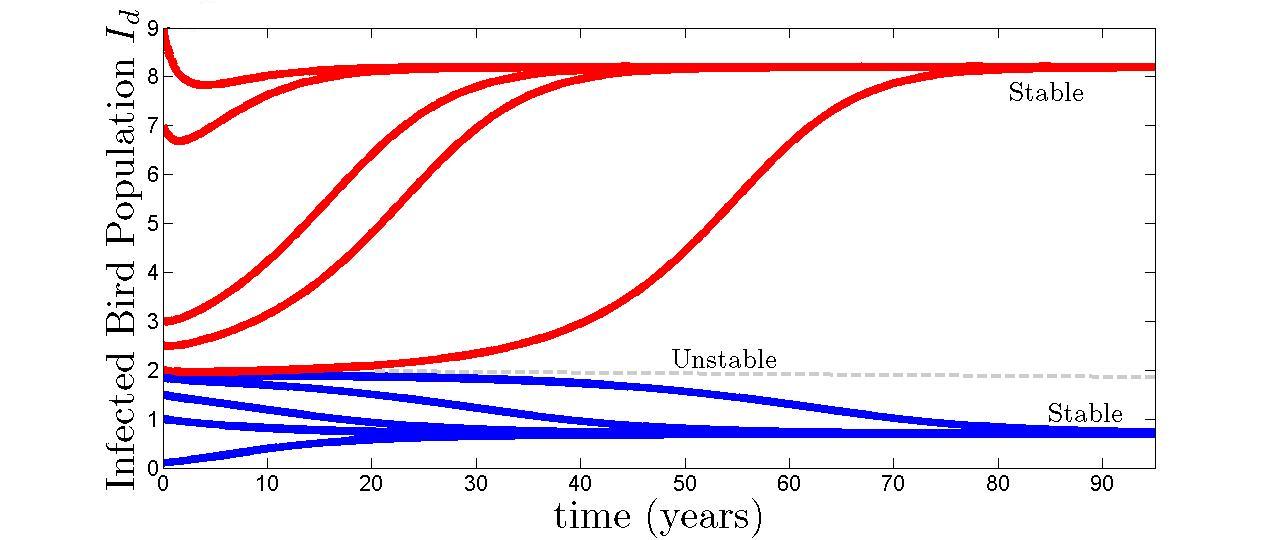}
\end{center}
\caption{\label{fig:dynamics}
Simulation of infected bird population versus time for various initial conditions and the same parameter values as in Fig.\ref{fig:phaseplan}.}
\end{figure}

\begin{proof} Suppose  $\mathcal R_0>1.$ Then the leading coefficient $x$ is negative since this implies $-\beta+\nu<0$. Hence $$\lim_{I \rightarrow \infty}F(I)=-\infty.$$ Also note that $F(0)=n$ and $n>0$ when $\mathcal R_0>1.$ $F(I)$ is a continuous function of $I$ and by fundamental theorem of algebra, we know that this polynomial can have at most five real roots. Through a geometric argument, now it is easy to see that there is either a unique or three or five positive endemic equilibria, if all equilibria are simple roots.
\end{proof}

    The bifurcation diagram in Fig.\ref{fig:hysteresis} shows the existence of a forward hysteresis bifurcation under some certain parameter values. Estimates of the parameters $\Lambda$, $\mu$, and $\nu$ are derived in \cite{Iwami1, Martcheva5}. For the transmission and culling rates, there is insufficient information, and these parameter values will be varied in the simulations. Note that the units for $N$ is $birds\ \times 10^7$. In simulations, we see that when $\mathcal R_0<1$, the disease dies out, but when $\mathcal R_0>1$, such a bifurcation may display certain catastrophic behaviors: a solution with initially a small number of infected birds may converge to the equilibrium level with a large number of infected birds. We discuss the epidemiological implications of forward hysteresis in Section $6$.

To find the stability of equilibria, we evaluate the Jacobian of the system
\eqref{model1} with non-monotone per capita culling rates. The general form of Jacobian of the system is given in (\ref{J}).
  We obtain the following result:
    
  \begin{theorem}\label{t7} Suppose $\mathcal R_0>1$ and there are three or five endemic equilibria. Suppose further the endemic equilibria are ordered with respect to the number of infected. Then the even numbered endemic equilibria are always unstable. 
  
\end{theorem}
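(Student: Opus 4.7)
The plan is to reduce the two-dimensional stability question to a one-dimensional sign argument. Endemic equilibria are precisely the simple positive roots of $\beta=F(I)$ with $F$ as in (\ref{equalast}); write $\alpha(I)=\mu+\nu+c_I\psi(I)$ and $\gamma(I)=\mu+c_S\phi(I)$. Since $F(0)=\alpha(0)=\beta^0$ and $\mathcal R_0>1$ implies $F(0)<\beta$, and the $I_i$ are transversal crossings of the line $y=\beta$, the graph of $F$ must cross this line alternately from below and above as $I$ increases. Consequently $F'(I_i)>0$ for odd $i$ and $F'(I_i)<0$ for even $i$.

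The heart of the argument is the identity $\mathrm{sgn}(\det J)=\mathrm{sgn}(F'(I^*))$ at every endemic equilibrium. To establish it, I would introduce the scalar function $\Psi(I)$ obtained by substituting the $I$-nullcline parametrization $\hat S(I):=\alpha(I)I/(\beta-\alpha(I))$ (valid for $I>0$) into the right-hand side of the $S$-equation of (\ref{model1}). A short computation using the identity $\beta\hat S(I)I/(\hat S(I)+I)=\alpha(I)I$ gives
\begin{equation*}
\Psi(I)=\frac{(\beta-F(I))\,(\Lambda-\alpha(I)I)}{\beta-\alpha(I)},
\end{equation*}
so at any endemic equilibrium (where $\Lambda-\alpha(I^*)I^*>0$ and $\beta-\alpha(I^*)>0$ follow from $S^*>0$) one finds $\Psi'(I^*)=-F'(I^*)(\Lambda-\alpha(I^*)I^*)/(\beta-\alpha(I^*))$. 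Independently, differentiating $\Psi$ by the chain rule and using the implicit-function slope $\hat S'(I^*)=-j_{22}/j_{21}$ together with the Jacobian entries in (\ref{J}) yields $\Psi'(I^*)=-\det J/j_{21}$. Since $j_{21}=\beta(I^*)^2/(N^*)^2>0$, comparing the two expressions produces the claimed sign identity.

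Combining the two steps, every even-numbered equilibrium has $\det J<0$, so the Jacobian has one positive and one negative real eigenvalue; the equilibrium is a saddle, and in particular it is unstable. The hard part is the algebraic identification of $\Psi'(I^*)$ simultaneously with $-\det J/j_{21}$ and with a positive multiple of $-F'(I^*)$. Once this bridge is in place, the alternation of signs of $F'$ along the equilibrium curve transfers mechanically to $\det J$, and the theorem follows without any further assumption on the precise shape of $\phi$ or $\psi$ beyond those in (\ref{COND2}).
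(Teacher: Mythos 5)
Your proposal is correct, and its skeleton --- reduce stability to the sign of $\det J$, show that this sign is dictated by the direction in which the scalar equilibrium curve crosses the critical value, and conclude that even-numbered crossings are saddles --- is the same as the paper's; but the bridge between the one-dimensional picture and $\det J$ is built differently. The paper starts from the slope inequality \eqref{Ineq} at equilibria for equation \eqref{equi3} (the analogue of your alternation of $\mathrm{sgn}\,F'(I_i)$, asserted there on the same geometric grounds), then expands $\det J$ explicitly for the specific rates $\phi(I)=I/(A+I^2)$, $\psi(I)=I/(B+I^2)$ with $A=B$, factors out $\mu+c_S I/(A+I^2)$, and substitutes the inequality to force $\det J<0$ at even-numbered equilibria. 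You instead prove the clean identity $\mathrm{sgn}(\det J)=\mathrm{sgn}(F'(I^*))$ once and for all, by evaluating the $S$-equation along the $I$-nullcline $\hat S(I)=\alpha(I)I/(\beta-\alpha(I))$ and computing $\Psi'(I^*)$ in two ways: your factorization $\Psi=(\beta-F)(\Lambda-\alpha I)/(\beta-\alpha)$ and your chain-rule/implicit-function identity $\Psi'(I^*)=-\det J/j_{21}$ both check out, as do $j_{21}=\beta (I^*)^2/(N^*)^2>0$ and the positivity of $(\Lambda-\alpha I^*)/(\beta-\alpha(I^*))$ at an admissible equilibrium. What your route buys is brevity and generality (no determinant expansion, no reliance on the particular functional forms or on $A=B$); what the paper's buys is a fully elementary hands-on computation. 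Two points you should make explicit: (i) like the paper, you need the equilibria to be nondegenerate, i.e.\ transversal crossings of $F(I)=\beta$, which is exactly the ``simple roots'' caveat of Proposition \ref{prop}; (ii) the alternation step uses that $F$ from \eqref{equalast} is continuous on a set containing $0$ and all equilibria, i.e.\ on $\{I\ge 0:\Lambda-\alpha(I)I>0\}$ --- for the rates \eqref{modifiedcullrate} this is a single initial interval because $I\alpha(I)$ is increasing, and for general rates satisfying \eqref{COND2} one must additionally observe that $F\to+\infty$ at every finite boundary point of the admissible set, which is what keeps the sign pattern $+,-,+,\dots$ intact; so your closing claim of full generality is defensible but needs that one extra line.
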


\begin{proof}

To investigate the stability of the  endemic equilibria, first we determine the sign of the determinant of $J$
depending on the equilibrium at which it is evaluated. From the equation \eqref{equi3} for
the equilibria, we have 
\begin{equation}\label{Ineq}
\D\frac{\beta f'(I)}{f(I)+I} -
\D\frac{\beta f(I)(f'(I)+1)}{(f(I)+I)^2}< - \D\frac{c_I (I^2-B)}{(B+I^2)^2} 
\end{equation}  
The inequality ``$<$'' holds in the case of the unique endemic equilibrium when $\mathcal R_0>1$ and in the case of the odd numbered endemic equilibria if there are three or five endemic equilibriums when $\mathcal R_0>1$. For the even numbered endemic equilibria, the inequality is
exactly the opposite, that is, we  have ``$>$''. We rewrite the above
inequality in the form
\begin{equation}\label{Ineq12}
\left[\D\frac{\beta}{N} -
  \D\frac{\beta S}{N^2}\right] f'(I) < \D\frac{\beta S}{N^2} - \D\frac{c_I(I^2-B)}{(B+I^2)^2} 
\end{equation}
where $N =f(I)+I$ and $S=f(I)$. In the expression above, $f'(I)$
is given by
\begin{equation}
f'(I) = \D\frac{-\left(\mu+\nu+ c_I\D\frac{2I
    B}{(B+I^2)^2}\right)}{\mu+c_S\D\frac{I}{A+I^2}}
+ \frac{Sc_S \D\frac{(I^2-A)}{(A+I^2)^2}}{\mu+c_S\D\frac{I}{A+I^2}}
\end{equation}
To simplify the $Det\ J$ we introduce two notations:
$$E_1 =  \D\frac{\beta S}{N} -\D\frac{\beta IS}{N^2},\ E_2 = \D\frac{\beta I}{N}  -\D\frac{\beta IS}{N^2}.$$
With this notation the determinant becomes:
$$Det\ J = \left[E_1-\left(\mu+\nu + c_I\D\frac{2IB}{(B+I^2)^2}\right)\right]
\left[-E_2 - \mu-c_S\D\frac{I}{A+I^2}\right] - E_2\left[-E_1 + c_SS\D\frac{I^2-A}{(A+I^2)^2}\right]$$
Simplifying the determinant, we have
\begin{align}
Det\ J &= - E_1 \left(\mu+c_S\D\frac{I}{A+I^2}\right) + E_2\left(\mu+\nu + c_I\D\frac{2IB}{(B+I^2)^2}\right)\nonumber\\
\qquad\qquad & + \left(\mu+\nu + c_I\D\frac{2I
    B}{(B+I^2)^2}\right)\left(\mu+c_S\D\frac{I}{A+I^2}\right) -E_2
c_S S\D\frac{I^2-A}{(A+I^2)^2}.\nonumber
\end{align}
Factoring out $\left(\mu+c_S\D\frac{I}{A+I^2}\right)$ we may recognize that
$E_2$ is multiplied by the derivative $f'(I)$. Simplifying further, we have
$$Det\ J = \left(\mu+c_S\D\frac{I}{A+I^2}\right)\left[-E_1 -E_2 f'(I) + \mu+\nu + c_I\D\frac{2IB}{(B+I^2)^2}\right].$$
Inequality \eqref{Ineq12} now implies that 
$$E_2f'(I) <\D\frac{\beta S I}{N^2}-c_I I\D\frac{I^2-B}{(B+I^2)^2}.$$
Replacing $E_2f'(I)$ with the right side of the above inequality in $Det\ J$,
we obtain
$$Det\ J > \left(\mu+c_S\D\frac{I}{A+I^2}\right) \left[-\D\frac{\beta
    S}{N} + \mu +\nu+c_I \frac{I}{B+I^2}\right] = 0.$$
The last equality follows from the equation for the equilibria.
Hence, $Det\ J>0$ for the unique equilibrium when $\mathcal R_0>1$ and for the
odd numbered endemic equilibriums when $\mathcal R_0>1$ and three or five endemic equilibriums exist. In addition $Det\ J<0$ for the even numbered endemic equilibriums when three or five equilibria
exist. Hence, the even numbered endemic equilibriums are always unstable saddle points. 

\end{proof}

  In Fig.\ref{fig:phaseplan}, a phase portrait of the main model \eqref{model1} with the per capita culling rates $\phi(I)=\frac{I}{A+I^2}$ and $\psi(I)=\frac{I}{B+I^2}$ is numerically generated using $Matlab$. Parameter values are chosen as in Fig.\ref{fig:hysteresis} and $\beta = 37.25$. The three equilibria are marked as red dots. The heteroclinic orbits from the middle equilibrium to the upper one and the lower one are displayed as black curves, so their basin of attraction is clearly identifiable. They are the unstable manifolds of the saddle point (the middle equilibrium). Also the stable manifolds of the saddle point are displayed as green curves. When $\mathcal R_0>1$ and there are three endemic equilibria, the region of attraction to which the initial condition belongs plays a crucial role. If the initial point lies in the basin of attraction of the lower equilibrium, the infection persists at a low equilibrium level. However, if the initial point lies in the basin of attraction of the upper equilibrium, it persists at the equilibrium level with the largest number of infected birds.

\begin{figure}
\begin{center}
\includegraphics[width=12cm,height=5cm]{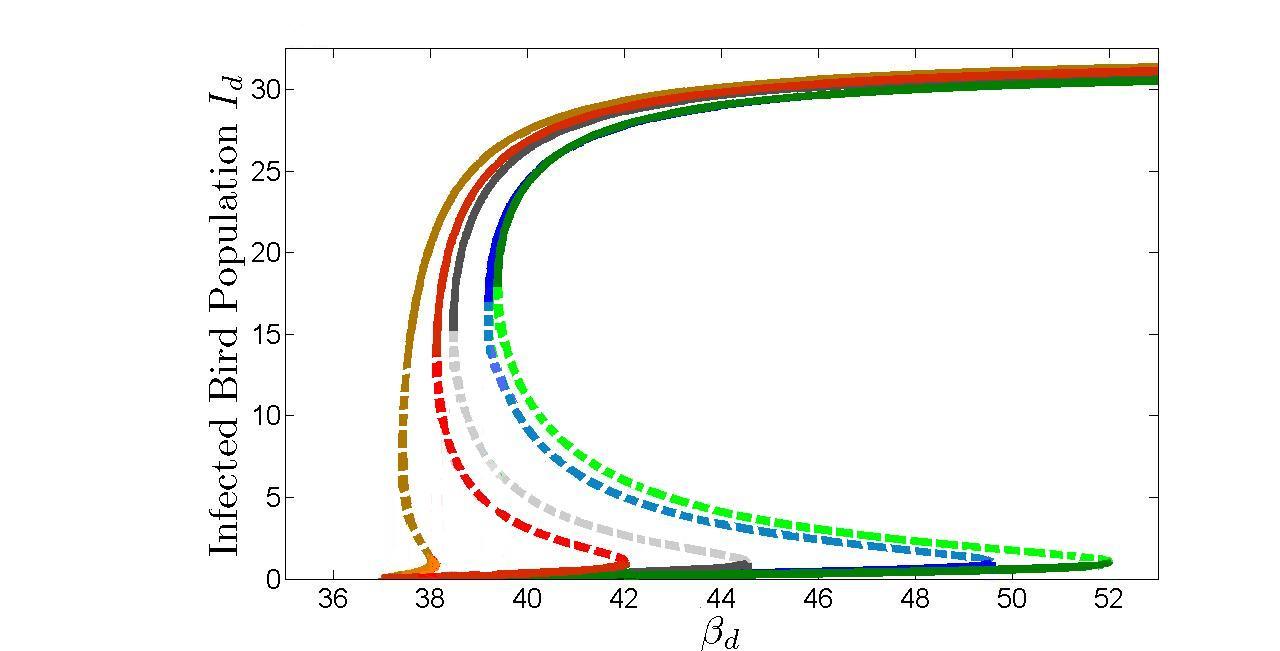}
\end{center}
\caption{\label{fig:region} The bifurcation diagram with respect to the parameter $\beta$ for the distinct per capita culling coefficients $c_I = 2 \text{(orange)}, 10 \text{ (red)}, 15 \text{ (grey)}, 25 \text{ (blue)}, 30 \text{ (green)}$. Observe that increasing per capita culling coefficient increases the region of hysteresis. The parameter values are the same with Fig.\ref{fig:phaseplan}}
\end{figure}
     
    For given parameter values in Fig.\ref{fig:phaseplan}, bistability occurs, where both the lower and upper equilibrium are attractors. This can be seen in simulations in Fig.\ref{fig:dynamics}. For solutions close to the unstable equilibrium, there is a sizable lag in time before they are repelled from the unstable equilibrium and converge to a stable equilibrium. Furthermore, Fig.\ref{fig:hysteresis} shows that multiple equilibria occur with $\beta$ as a bifurcation parameter. We are interested how the culling coefficient $c_I$ of infected birds affects the strength of the hysteresis. Fig.\ref{fig:region} presents the bifurcation diagram with respect to the parameter $\beta$ for various per capita culling coefficients. We see that an increase in the per-capita culling coefficient $c_I$ increases the region in which multiple equilibria occurs and the basin of the attraction of lower equilibrium.

\section{Selective culling}

  Culling exclusively the known infected and high risk in-contact birds has been utilized as a control measure in some countries such as Vietnam, Indonesia, Thailand and Cambodia. One of the most important reasons behind the use of this type of culling is to mitigate the heavy economic losses associated with mass culling. This strategy is used especially in the countries dominated by backyard poultry farms such as Vietnam \cite{ FAO0, FAO00, Modcull, WHO0, Sridhar}. Usually these farms are owned by families, with low incomes. In some cases, even though the disease control authorities do only approve mass culling, it is well-known that farmers are avoiding to go through mass culling. Economic concerns prompt farmers to almost exclusively cull infected birds. In order to model this scenario, we consider an extreme case in which the target group for culling only includes infected birds. Furthermore, we assume resource limitation causes a saturation in the rate of culling of infected poultry. Therefore, we explore the dynamics of the system \eqref{model1} with per capita culling rates $c_S\phi(I)$ and $c_I\psi(I)$ where $c_S=0$ and $c_I\psi(I)$ is a decreasing function of $I$. We analyze the system \eqref{model1} with selective culling rates 
 \begin{equation}
 \label{selectivecullrate}
  c_S\phi(I)S =0,\quad c_I\psi(I)I=c_I\D\frac{I}{B+I}, 
 \end{equation} 
 with $B>0$. 
Applying (\ref{RN}), the reproduction number in the case of selective culling is derived as
\begin{equation}
\mathcal R_0 = \frac{\beta }{\mu+\nu +c_I/B}.
\end{equation}
Note that in this case, the reproduction number depends on the culling rate $c_I$. \\
\begin{figure}
\begin{center}
\includegraphics[width=9cm,height=5cm]{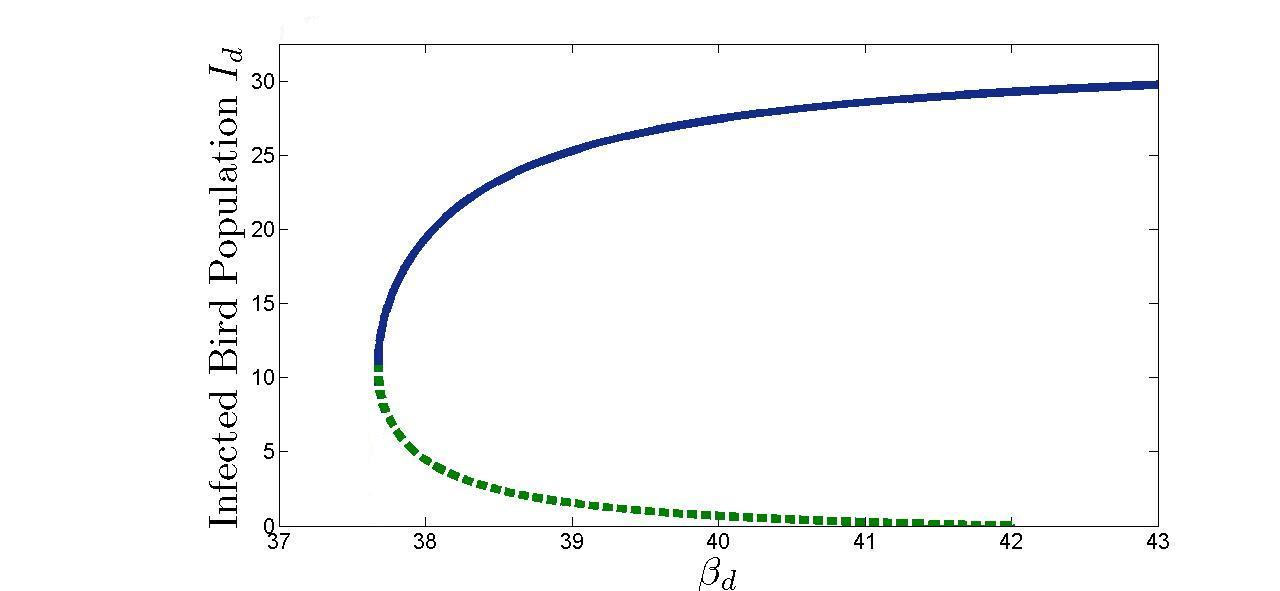}
\end{center}
\caption{\label{fig:backward}
Backward bifurcation caused by selective culling in model
\eqref{model1}. Parameter values are: $B = 1$,
$\nu = 0.1*365$,
$\mu = 0.5$,
$c_I = 5$,
$\Lambda = 1200$.}
\end{figure}

\begin{theorem} If  $\mathcal R_0>1$, there is a unique endemic equilibrium  $\mathcal
  E^*=(S^*_d,I^*_d)$.
If $\mathcal R_0<1$, then there may be zero endemic equilibria or backward bifurcation may occur in which case there can be two equilibria
$\mathcal E_1^* =(S^*_{d1},I^*_{d1})$ and $\mathcal E_2^* =(S^*_{d2},I^*_{d2})$.
\end{theorem}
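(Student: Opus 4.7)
The plan is to eliminate $S$ from the endemic equilibrium system and reduce the problem to counting positive solutions of the scalar equation $\beta=F(I)$ on the admissible interval where $S^*>0$. With $c_S\phi\equiv 0$ and $\psi(I)=1/(B+I)$, equation \eqref{equalast} becomes $\beta=F(I)$, where
$$F(I) \;:=\; \frac{\mu\, h(I)\, I}{\Lambda - h(I)\, I} + h(I),\qquad h(I):=\mu+\nu+\frac{c_I}{B+I},$$
on $I\in(0,I_{\max})$ with $h(I_{\max})I_{\max}=\Lambda$. The anchor values are $F(0^+)=h(0)=\mu+\nu+c_I/B=\beta^0$ (the critical transmission rate at $\mathcal R_0=1$) and $F(I_{\max}^-)=+\infty$. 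Moreover, a short computation gives $F'(0)=\mu h(0)/\Lambda - c_I/B^2$, whose sign is, by Theorem \ref{fb}, exactly the sign indicating forward ($+$) versus backward ($-$) bifurcation at $\mathcal R_0=1$.

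The crux of the argument is a shape lemma: the graph of $F$ on $(0,I_{\max})$ has at most one critical point. Equivalently, $F$ either increases strictly throughout (forward regime) or first strictly decreases to a unique interior minimum and then strictly increases to $+\infty$ (backward regime). Granted this, the theorem follows by elementary graph crossing against the horizontal line $y=\beta$. For $\mathcal R_0>1$ the line lies above $F(0)=\beta^0$ and meets $F$ exactly once on its rising branch, giving the unique endemic equilibrium $\mathcal E^*$. For $\mathcal R_0<1$ the line lies below $F(0)$, so in the forward regime it misses the graph entirely (no endemic equilibrium), while in the backward regime it cuts the ``valley'' of $F$ at exactly two points whenever $\beta$ lies strictly between the interior minimum of $F$ and $\beta^0$, yielding $\mathcal E_1^*$ and $\mathcal E_2^*$.

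The main obstacle is the shape lemma itself, since clearing denominators in $F'(I)=0$ alone gives a quartic and Descartes' rule is thus inconclusive. I plan to sidestep this by working with the cubic polynomial $P(I;\beta)$ obtained from multiplying the equilibrium equation $(\beta-h(I))(\Lambda-h(I)I)=\mu h(I)I$ by $(B+I)^2$; direct expansion produces leading coefficient $(\mu+\nu)(\beta-\nu)>0$ and constant term $\Lambda B^2(h(0)-\beta)$. At $\beta=\beta^0$ one has the distinguished factorization $P(I;\beta^0)=I\cdot Q(I)$, with $Q$ a quadratic satisfying $\mathrm{sgn}\,Q(0)=\mathrm{sgn}\,F'(0)$, so the root at $I=0$ corresponds to the DFE collision while the remaining two roots of $Q$ capture the nascent endemic branch. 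Tracking how the three roots of $P(\cdot;\beta)$ move under continuation in $\beta$ across $\beta^0$, together with the sign of $P(0;\beta)$ on either side of $\mathcal R_0=1$, will force exactly one positive root for $\mathcal R_0>1$ and either zero or two positive roots for $\mathcal R_0<1$, producing the counts claimed in the theorem.
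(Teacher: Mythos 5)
Your reduction is in substance the same as the paper's: clearing denominators in the equilibrium relation $(\beta-h(I))(\Lambda-h(I)I)=\mu h(I)I$, with $h(I)=\mu+\nu+c_I/(B+I)$, against $(B+I)^2$ produces exactly the paper's identity $p_1(I)=p_2(I)$, where $p_1(I)=[\Lambda(B+I)-((\mu+\nu)(B+I)+c_I)I][(\beta-\mu-\nu)(B+I)-c_I]$ and $p_2(I)=\mu(B+I)[(\mu+\nu)(B+I)+c_I]I$; your cubic is $P=p_2-p_1$. The genuine gap is that the heart of the theorem --- uniqueness of the endemic equilibrium when $\mathcal R_0>1$ --- is never actually established. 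You concede the ``shape lemma'' for $F$ is the main obstacle and do not prove it, and the proposed sidestep supplies only parity information: a positive leading coefficient $(\mu+\nu)(\beta-\nu)$ together with $P(0)=\Lambda B^2(h(0)-\beta)<0$ forces an \emph{odd} number of positive roots, i.e.\ one \emph{or three}, and nothing in the continuation sketch (``tracking how the three roots move across $\beta^0$'') excludes a saddle-node pair of extra roots appearing at some $\beta>\beta^0$, which is precisely the scenario uniqueness must rule out. The paper avoids this by keeping the factored structure you discard: when $\mathcal R_0>1$ the linear factor $(\beta-\mu-\nu)(B+I)-c_I$ is positive for every $I\ge 0$ and the quadratic factor is a downward parabola positive exactly on $[0,I_{\max})$, so $p_1$ has one positive and two negative roots and tends to $-\infty$, while $p_2$ is strictly increasing from $p_2(0)=0$; the intersection count is then read off from this geometry rather than from coefficient signs alone. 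Until you either prove the shape lemma (that $F$ has at most one interior critical point on $(0,I_{\max})$) or supply an argument excluding three positive roots of $P$ for $\beta>\beta^0$, the first assertion of the theorem is unproved.

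Two smaller points. First, your claim that the leading coefficient $(\mu+\nu)(\beta-\nu)$ is positive uses $\beta>\nu$, which is automatic when $\mathcal R_0>1$ but not when $\mathcal R_0<1$; for the subthreshold case you should either note this or argue directly that at most two admissible roots exist (which is all the theorem asserts there, and is what the paper proves: at most two intersections of $p_1$ and $p_2$ for $I>0$). Second, positive roots of $P$ need not be endemic equilibria: the factorization requires $\beta-h(I)$ and $\Lambda-h(I)I$ to have the same sign, and roots with $I\ge I_{\max}$ (hence $S^*\le 0$) must be discarded. This is harmless when $\mathcal R_0>1$, since then $\beta>h(0)\ge h(I)$ for all $I\ge 0$ so every positive root lies in $(0,I_{\max})$, but it needs to be said when $\mathcal R_0<1$, where both factors can be simultaneously negative. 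Your formula $F'(0)=\mu h(0)/\Lambda-c_I/B^2$ and its role in separating the forward and backward regimes do agree with the paper's criterion \eqref{BBC}.
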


\begin{proof} First we consider the case $\mathcal R_0>1$. We rewrite equation
  \eqref{equi3}
in the form of an equality of two polynomials $p_1(I) = p_2(I)$ where
\begin{align}
p_1(I)& = [\Lambda(B+I)
  -((\mu+\nu)(B+I)+c_I)I][(\beta-\mu-\nu)(B+I) -c_I]\nonumber\\
p_2(I)& = \mu(B + I)[(\mu+\nu)(B+I)+c_I]I\nonumber
\end{align}

$p_1(I)$ is a polynomial of degree three satisfying:
\begin{itemize}
\item $p_1(0) >0$.
\item $p_1(I)$ has one positive and two negative roots.
\item $\lim_{I\to\infty} p_1(I) = -\infty$.
\end{itemize}
At the same time $p_2(I)$ for $I\ge 0$ is a strictly increasing
function. Furthermore
$p_2(0)=0$. Under these conditions, it is not hard to see that the two
polynomials always have exactly one intersection with $I>0$.

In the case $\mathcal R_0<1$,  $p_1(I)$ is a polynomial of degree three satisfying:
\begin{itemize}
\item $p_1(0) <0$.
\item $p_1(I)$ has at most two positive roots.
\end{itemize}
At the same time $p_2(I)$ for $I\ge 0$ has the same properties. Hence, if
there are intersections of the two polynomials for $I>0$, these
intersections are at most two. 
\begin{figure}
\begin{center}
\includegraphics[width=12cm,height=9cm]{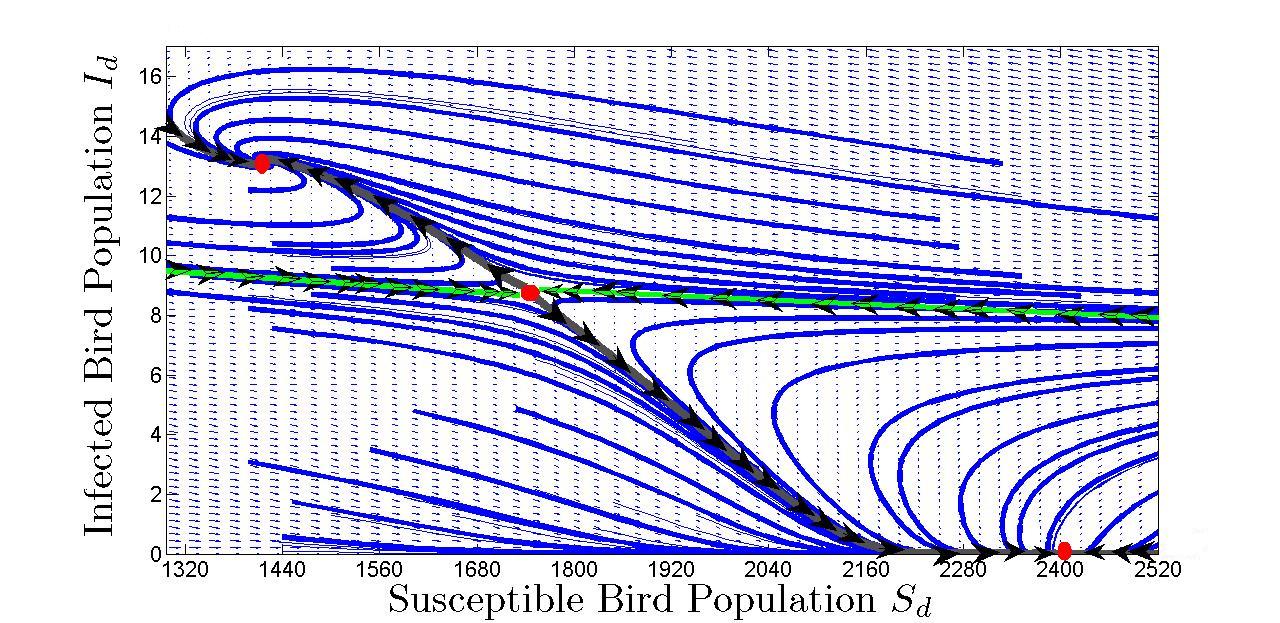}
\end{center}
\caption{\label{fig:phasebackward}
Phase Portrait of the model
\eqref{model1} with selective culling rate $c_I\psi(I)I=c_I\frac{I}{B+I}$. Parameter values are: $B = 1$,
$\nu = 0.1*365$,
$\mu = 0.5$,
$c_I = 5$,
$\beta = 37.7$
$\Lambda = 1200$.}
\end{figure}
Two  subthreshold endemic equilibria are present in the case when backward
bifurcation occurs. By the equation (\ref{fbb}), we have
\begin{equation}
F'(0) = \D\frac{-c_I \Lambda +  \mu[(\mu + \nu)B^2 + c_IB]}{B^2 \Lambda}
\end{equation}
Hence, by the theorem (\ref{fb}), backward bifurcation occurs if and only if $F'(0)<0$, that is if and
only if
\begin{equation}\label{BBC}
 -c_I \Lambda +  \mu[(\mu + \nu)B^2 + c_IB]<0.
\end{equation}
\end{proof}

\begin{figure}
\begin{center}
\includegraphics[width=9cm,height=5cm]{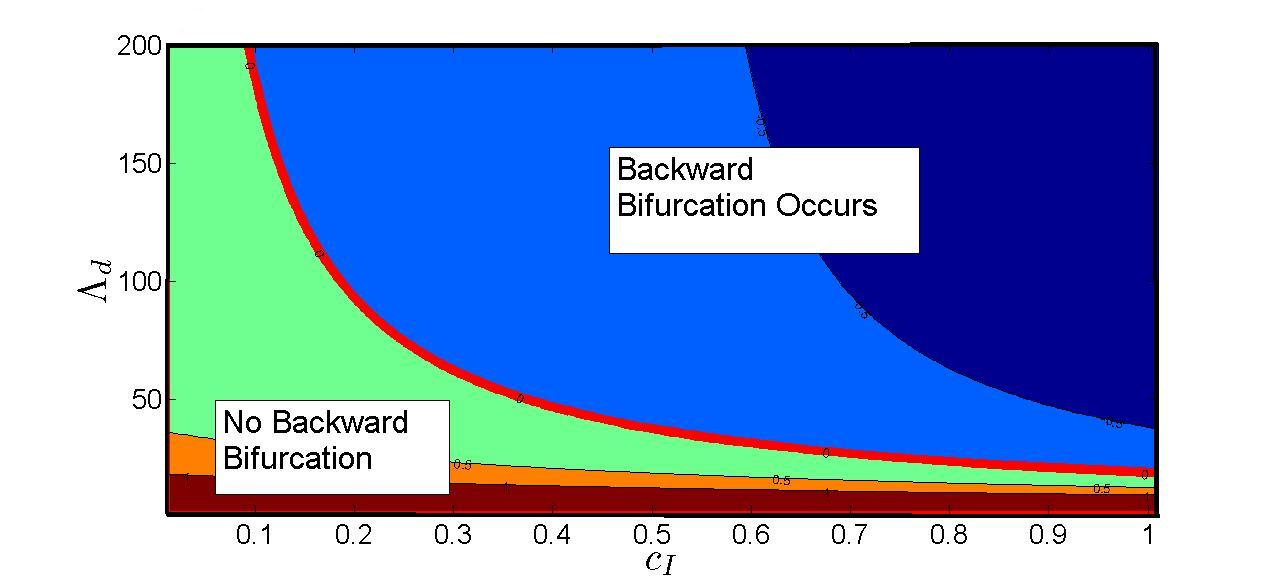}
\end{center}
\caption{\label{fig:Area}
Area in the $\Lambda$, $c_I$ space where backward bifurcation occurs in model
\eqref{model1}. The red line separates the two regions where backward
bifurcation occurs and where backward bifurcation does not occur. Blue  and dark blue color is
associated with backward bifurcation.
Parameter values are: $B = 1$,
$\nu = 0.1*365$,
$\mu = 0.5$,
$\beta = 37.7$.} 
\end{figure}
We illustrate the occurrence of backward bifurcation in the Fig.
\ref{fig:backward}. 

The area of the parameter space where condition \eqref{BBC} holds
is illustrated in Fig.\ref{fig:Area}.
 Fig.\ref{fig:Area} suggests that backward bifurcation always occurs when
 infected birds culling rate $c_I$ and susceptible birds repopulation rate
 $\Lambda$ are large enough. On the other hand, when the culling rate $c_I$ is small
 backward bifurcation does not occur. 

In what follows we investigate the stability of the equilibria.
To consider the stability of equilibria we look at the Jacobian of system
\eqref{model1} with selective culling.
Next, we derive the stability of equilibria:

\begin{theorem} \label{theo1} If $\mathcal
  R_0<1$ and there are two endemic equilibria, then $Det\ J < 0$ for the lower one (hence it is unstable) and $Det\ J > 0$ for the upper one. Moreover if $\mathcal R_0 > 1$, then $Det\ J > 0$ for the unique endemic equilibrium.
\end{theorem}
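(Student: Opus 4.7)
The plan is to compute $\det J$ at an endemic equilibrium of the selective culling system and then read off its sign from the crossing pattern of the polynomials $p_1, p_2$ already analyzed in the existence proof. First I would specialize the Jacobian (\ref{J}) by setting $\phi \equiv 0$ and $\psi(I) = 1/(B+I)$, so that $\psi(I) + \psi'(I) I = B/(B+I)^2$. With the shorthand
\[
E_1 := \frac{\beta S}{N} - \frac{\beta IS}{N^2}, \qquad E_2 := \frac{\beta I}{N} - \frac{\beta IS}{N^2}, \qquad \tilde Q(I) := \mu + \nu + \frac{c_I B}{(B+I)^2},
\]
direct expansion of (\ref{J}) yields $\det J = \tilde Q(I)(E_2 + \mu) - \mu E_1$. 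Using the equilibrium identities $\beta S/N = Q(I)$ and $\mu S = \Lambda - Q(I)I$ derived from (\ref{model1}), with $Q(I) := \mu + \nu + c_I/(B+I)$, I rewrite $E_1 = Q S/N$ and $E_2 = (\beta - Q) I/N$. Substituting and using $\mu N = \Lambda - (Q - \mu) I$, a short manipulation puts $\det J$ in the form $\det J = [I/(N(B+I)^2)] \cdot H(I)$, where $H(I)$ is a polynomial-in-$I$ collecting the contributions $(B+I)^2[(\beta - Q)\tilde Q + \mu Q]$ and $-\mu c_I N$.

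Next I would identify $H(I^*)$ with a multiple of $G'(I^*)$, where $G(I) := (\beta - Q(I))\,f(I) - Q(I)\,I$ and $f(I) := [\Lambda - Q(I)I]/\mu$ is $S$ expressed as a function of $I$ via the $S$-equilibrium equation. The slick identity $Q'(I)\,I + Q(I) = \tilde Q(I)$, which follows immediately from $Q'(I) = -c_I/(B+I)^2$, together with the at-equilibrium equality $\mu(f(I^*) + I^*) = \mu N$, collapses $H(I^*)$ to $-\mu(B+I^*)^2\,G'(I^*)$, giving the clean formula
\[
\det J\big|_{I = I^*} \;=\; -\,\frac{\mu I^*}{N}\, G'(I^*).
\]
On the other hand, direct expansion of the definitions of $p_1, p_2$ shows $p_1(I) - p_2(I) = \mu(B+I)^2\,G(I)$, so at any root $I^*$ of $G$ one has $p_1'(I^*) - p_2'(I^*) = \mu(B+I^*)^2\,G'(I^*)$. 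Therefore the sign of $\det J|_{I^*}$ is exactly the sign of $p_2'(I^*) - p_1'(I^*)$.

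Finally I would conclude from the crossing geometry. Since $p_2$ is strictly increasing on $[0, \infty)$ with $p_2(0) = 0$ and $p_1$ is cubic with $\lim_{I \to \infty} p_1(I) = -\infty$: when $\mathcal R_0 > 1$ the condition $p_1(0) > 0$ forces the unique positive intersection to be a downward crossing with $p_1' < p_2'$, so $\det J > 0$; when $\mathcal R_0 < 1$ with backward bifurcation, $p_1(0) < 0$ and the lower equilibrium must be an upward crossing with $p_1' > p_2'$ (so $\det J < 0$), while the upper equilibrium is a downward crossing with $p_1' < p_2'$ (so $\det J > 0$). The main obstacle is the algebraic collapse in the middle step: the clean identity $H(I^*) = -\mu(B+I^*)^2 G'(I^*)$ is tight but not transparent, and rests on applying the two equilibrium identities in the correct order and on using $Q'I + Q = \tilde Q$ to eliminate $Q'$. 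Once that formula is secured, the sign analysis is a picture-level consequence of how the monotone curve $p_2$ meets the cubic $p_1$.
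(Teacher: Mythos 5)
Your proposal is correct, and its verdict matches the paper's, but the execution differs from what the authors actually do: the paper's proof of this theorem is a one-line deferral to the argument of Theorem \ref{t7}, i.e.\ an inequality-based manipulation in which the crossing direction of the equilibrium relation \eqref{equi3} (encoded as an inequality on $f'(I)$, with ``$<$'' at the upper/unique equilibrium and ``$>$'' at the lower one) is substituted into $Det\ J$ to bound it above or below $0$. You instead derive an exact identity, $\det J\big|_{I^*}=-\frac{\mu I^*}{N}G'(I^*)$ with $G(I)=(\beta-Q(I))f(I)-Q(I)I$, and then observe $p_1-p_2=\mu(B+I)^2G(I)$, so the sign of $\det J$ is exactly the sign of $p_2'(I^*)-p_1'(I^*)$, read off from the crossing picture of the polynomials already set up in the existence proof. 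I checked the algebra: with $\phi\equiv 0$ the Jacobian \eqref{J} indeed gives $\det J=\tilde Q(E_2+\mu)-\mu E_1$, the equilibrium identities $E_1=QS/N$, $E_2=(\beta-Q)I/N$, $\mu S=\Lambda-QI$ and $Q'I+Q=\tilde Q$ do collapse this to $-\mu I G'(I)/N$, and $p_1-p_2=\mu(B+I)^2G$ is right, so at a root $(p_1-p_2)'=\mu(B+I)^2G'$. What your route buys is an equality rather than a one-sided bound (it also shows $\det J=0$ precisely at a tangential intersection), and it ties the stability conclusion directly to the $p_1,p_2$ geometry used for existence; what the paper's route buys is uniformity, since the same inequality argument covers both the modified-culling case of Theorem \ref{t7} and this selective-culling case without case-specific algebra. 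One small caveat, shared with the paper: the strict sign conclusions at the crossings implicitly assume the intersections are transversal (simple roots); in the degenerate tangent case both arguments give only $\det J=0$.
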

\begin{proof}
We use similar arguments with the proof of Theorem \ref{t7}. 
\end{proof}

 In fact, in the case $\mathcal R_0<1$, if $Tr\ J |_{\epsilon^*=(S^*,I^*)}<0$, the upper equilibrium $\epsilon^*=(S^*,I^*)$ is locally stable. However, numerical simulations suggest that for certain parameters, the upper equilibrium may lose stability via Hopf bifurcation, with an unstable bifurcating periodic solution. In this case, solutions outside the periodic orbit converge to the disease-free equilibrium. Furthermore, in the case backward bifurcation occurs and $\mathcal R_0>1$, numerically we found that $Tr\ J<0$ holds for all parameter values, but we could not prove that analytically.

\begin{theorem} \label{theo2} Assume the bifurcation is forward; that is 
$$
-c_I \Lambda +  \mu[(\mu + \nu)B^2 + c_IB]>0. 
$$
Then $Tr\ J |_{\epsilon^*=(S^*,I^*)}<0$ if $\mathcal R_0>1$.
\end{theorem}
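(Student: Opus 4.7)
The plan is to compute $\mathrm{Tr}\,J$ at the endemic equilibrium explicitly, simplify using the two equilibrium conditions, and bound the result using $\mathcal R_0>1$. First I would specialize the Jacobian \eqref{J} to the selective culling case: with $c_S=0$ and $\psi(I)=1/(B+I)$ one has $\psi(I)+\psi'(I)I=B/(B+I)^2$. Writing $N=S+I$, I would read off
\begin{align*}
j_{11}&=-\frac{\beta I^2}{N^2}-\mu,\qquad
j_{22}=\frac{\beta S}{N}-\frac{\beta I S}{N^2}-(\mu+\nu)-\frac{c_I B}{(B+I)^2}.
\end{align*}
Using the equilibrium identity $\beta S/N=\mu+\nu+c_I/(B+I)$ (obtained by dividing the $I$-equation by $I$), the term $\mu+\nu$ drops out of $j_{22}$ and the remaining culling contribution collapses:
\[
\frac{c_I}{B+I}-\frac{c_I B}{(B+I)^2}=\frac{c_I I}{(B+I)^2}.
\]
Adding $j_{11}+j_{22}$ and collecting the $\beta$-terms using $I+S=N$ leaves
\[
\mathrm{Tr}\,J=-\mu-\frac{\beta I}{N}+\frac{c_I I}{(B+I)^2}.
\]

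Next I would eliminate the remaining $\beta I/N$ via the same equilibrium identity: $\beta I/N=\beta-\beta S/N=\beta-(\mu+\nu)-c_I/(B+I)$. Substituting and combining the two culling fractions over the common denominator $(B+I)^2$ yields the clean expression
\[
\mathrm{Tr}\,J=\nu-\beta+\frac{c_I(B+2I)}{(B+I)^2}.
\]
The key algebraic observation is that
\[
\frac{c_I(B+2I)}{(B+I)^2}-\frac{c_I}{B}=-\frac{c_I I^2}{B(B+I)^2}\le 0,
\]
so the culling contribution to the trace is dominated by $c_I/B$ for every $I\ge 0$. Since $\mathcal R_0>1$ is exactly $\beta>\mu+\nu+c_I/B$, i.e.\ $\nu-\beta<-\mu-c_I/B$, I would chain the two inequalities to get
\[
\mathrm{Tr}\,J<-\mu-\frac{c_I}{B}+\frac{c_I(B+2I)}{(B+I)^2}\le -\mu<0,
\]
which is the desired conclusion.

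There is essentially no obstacle: once one recognizes to use the equilibrium relation $\beta S/N=\mu+\nu+c_I/(B+I)$ to collapse both $j_{22}$ and $\beta I/N$, the computation is mechanical, and the only slightly clever step is the pointwise bound $(B+2I)/(B+I)^2\le 1/B$. I note in passing that the forward-bifurcation hypothesis is not actually invoked in the argument; the conclusion $\mathrm{Tr}\,J<0$ follows from $\mathcal R_0>1$ alone together with the specific shape of the selective-culling rate, which is consistent with the authors' numerical observation that the same sign of the trace also holds in the backward-bifurcation regime for $\mathcal R_0>1$.
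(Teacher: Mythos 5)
Your computation is correct, and up to the trace identity it coincides with the paper's argument: both use the equilibrium relation $\beta S/N=\mu+\nu+c_I/(B+I)$ to reduce the trace to $\mathrm{Tr}\, J=-\beta I/N-\mu+c_I I/(B+I)^2$. The two proofs diverge at the final estimate. The paper bounds the culling term by $c_I I/(B+I)^2\le (I/N)\,c_I\Lambda/(\mu B^2)$, using $I/(B+I)^2\le I/B^2$ together with $N\le \Lambda/\mu$ at equilibrium, and therefore needs both $\mathcal R_0>1$ and the forward-bifurcation inequality $\mu[(\mu+\nu)B^2+c_IB]>c_I\Lambda$ to make the resulting bracket positive. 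You instead eliminate $\beta I/N=\beta-(\mu+\nu)-c_I/(B+I)$ altogether, arriving at $\mathrm{Tr}\, J=\nu-\beta+c_I(B+2I)/(B+I)^2$, and then use the pointwise bound $(B+2I)/(B+I)^2\le 1/B$, so that only $\mathcal R_0>1$ (i.e.\ $\beta>\mu+\nu+c_I/B$) is invoked, and you even obtain the quantitative bound $\mathrm{Tr}\, J<-\mu$. I checked your Jacobian entries, the identity $\psi(I)+\psi'(I)I=B/(B+I)^2$, and the algebra; all are right, and your trace formula agrees with the paper's. Your closing remark is also correct and is the genuinely valuable difference: your estimate does not use the forward-bifurcation hypothesis at all, so it proves $\mathrm{Tr}\, J<0$ at any endemic equilibrium whenever $\mathcal R_0>1$, including the regime $-c_I\Lambda+\mu[(\mu+\nu)B^2+c_IB]<0$ where the bifurcation is backward --- precisely the case the authors state they could only verify numerically. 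In short, your route is sharper and more general; the paper's cruder bound buys nothing beyond matching the hypothesis it chooses to state.
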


\begin{proof}
By the equilibrium condition (\ref{bS}), we obtain
\begin{align*}
\begin{array}{l} 
Tr J  =   -\D\frac{\beta I}{N} - \mu + c_I\D\frac{I}{(B+I)^2}. \\ \\
\end{array}
\end{align*} 
If $\mathcal R_0 > 1$, then $ \beta = \mu + \nu + \frac{c_I}{B} + \epsilon $, for some $\epsilon > 0$. Therefore
\begin{align*}
\begin{array}{l} 
Tr J  \leq -\D\frac{I}{N} \left[\beta - \D\frac{c_I(\D\frac{\Lambda}{\mu})}{B^2} \right] - \mu \\  \\
\qquad\quad = -\D\frac{I}{N} \left[\mu + \nu + \frac{c_I}{B} + \epsilon - \D\frac{c_I\Lambda}{\mu B^2} \right] - \mu \\ \\
\qquad\quad = -\D\frac{I}{N}\epsilon - \D\frac{I}{N}\left[\mu \left[ (\mu + \nu )B^2 + c_I B \right] - c_I \Lambda \right].\frac{1}{\mu \beta^2}
\qquad\quad < 0.
\end{array}
\end{align*} 

\end{proof}

\begin{corollary} \label{cor1} If the bifurcation is forward, then the unique endemic equilibrium $\epsilon^*=(S^*,I^*)$ is locally asymptotically stable when $\mathcal R_0>1$.
\end{corollary}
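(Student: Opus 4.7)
The plan is to combine Theorem \ref{theo1} and Theorem \ref{theo2} and appeal to the standard eigenvalue criterion for a $2\times 2$ Jacobian. Since the system \eqref{model1} is planar, the local stability of the endemic equilibrium is determined by the trace and determinant of the Jacobian $J$ evaluated at $\epsilon^*=(S^*,I^*)$: the equilibrium is locally asymptotically stable precisely when $\mathrm{Tr}\,J<0$ and $\mathrm{Det}\,J>0$.

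First, I would invoke Theorem \ref{theo1}, which under the hypothesis $\mathcal R_0>1$ guarantees that the unique endemic equilibrium satisfies $\mathrm{Det}\,J>0$. Second, I would invoke Theorem \ref{theo2}, whose hypothesis is exactly the forward bifurcation condition
\[
-c_I\Lambda+\mu[(\mu+\nu)B^2+c_IB]>0,
\]
together with $\mathcal R_0>1$, and which concludes that $\mathrm{Tr}\,J|_{\epsilon^*}<0$. Both hypotheses of the corollary are in force, so both conclusions apply simultaneously at the same equilibrium $\epsilon^*$.

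Finally, I would close the argument by noting that the characteristic polynomial of the $2\times 2$ matrix $J|_{\epsilon^*}$ is $\lambda^2-(\mathrm{Tr}\,J)\lambda+\mathrm{Det}\,J$, and the Routh--Hurwitz criterion in this planar setting reduces to the two sign conditions already established. Hence both eigenvalues have strictly negative real part and $\epsilon^*$ is locally asymptotically stable.

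There is no real obstacle here — all of the nontrivial analytic work (ruling out a zero or positive determinant, and controlling the sign of the trace in spite of the destabilising term $c_I I/(B+I)^2$) was carried out in Theorems \ref{theo1} and \ref{theo2}. The only thing one must check is that the equilibrium at which those two theorems are applied is the \emph{same} equilibrium; but when the bifurcation is forward and $\mathcal R_0>1$ there is only one endemic equilibrium, so this is automatic. The proof is therefore essentially a one-line citation of the previous two results plus the Routh--Hurwitz observation.
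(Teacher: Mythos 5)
Your proposal is correct and follows essentially the same route as the paper: the paper's proof also establishes $Det\,J|_{\epsilon^*}>0$ for $\mathcal R_0>1$ (by the same argument underlying Theorem \ref{theo1}) and then combines this with the trace condition of Theorem \ref{theo2} to conclude local asymptotic stability via the standard planar trace--determinant criterion. Your only addition is making the Routh--Hurwitz step and the uniqueness-of-equilibrium point explicit, which the paper leaves implicit.
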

\begin{proof} By using similiar argument in the proof of Theorem \ref{t7}, we show that $ Det\ J |_{\epsilon^*=(S^*,I^*)}>0$ whenever $\mathcal R_0>1$. Therefore Theorem \ref{theo2} implies the result in the Corollary \ref{cor1}.\\
\end{proof}

\begin{figure}[t]
\begin{center}
\includegraphics[width=13cm,height=5cm]{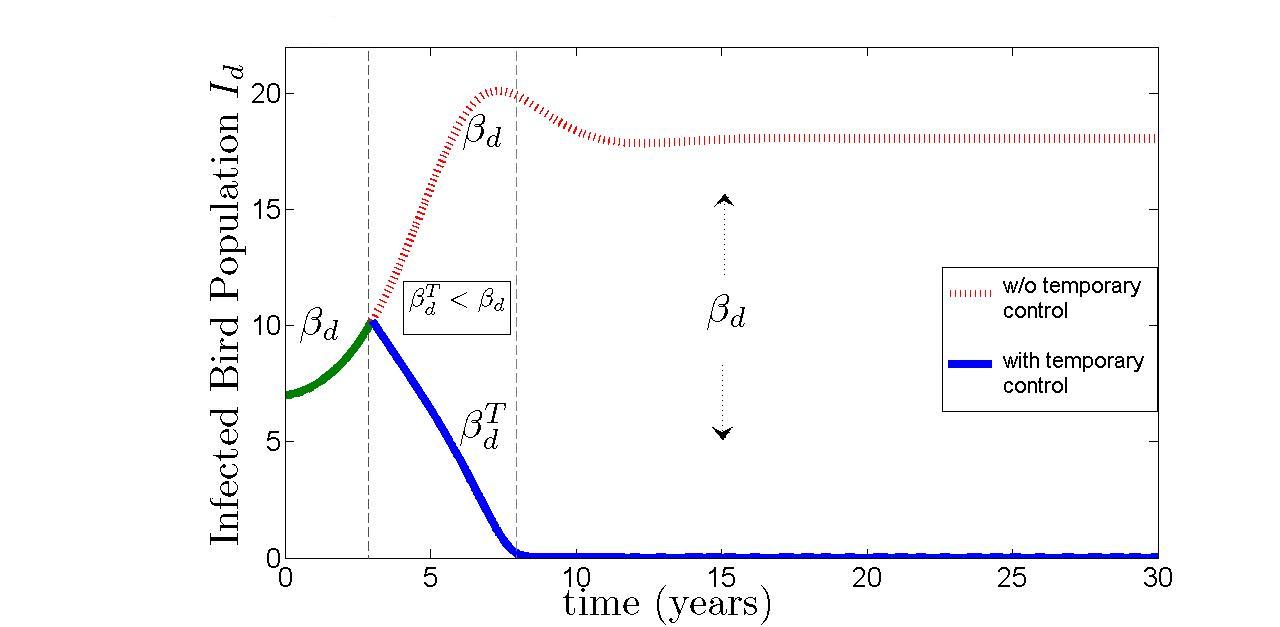}
\end{center}
\caption{
 Infected bird population versus time (in years) in which temporary control measures are employed along with selective culling (with the initial condition $(1000,7)$ and the same parameter values in Fig. \ref{fig:backward} and $\beta$ defined as a piecewise function \eqref{beta_d}). Also the solution is shown to converge to upper equilibrium for the case of no temporary control measures, i.e. with constant $\beta$. 
}
\label{fig:TMP}
\end{figure} 

\section{Epidemiological Implications of Bistable Dynamics}

 In the case of selective culling modeled by the per-capita culling rate (\ref{selectivecullrate}), the model undergoes a backward bifurcation as the transmission rate $\beta$ increases: the persistence of the disease can critically depend on the initial condition. For the parameter values utilized in Fig.\ref{fig:phasebackward}, numerical simulations show that the phase plane can be separated into two basins of attraction, one for the disease free equilibrium and the other for the upper equilibrium. Small changes in reproduction number can produce large changes in equilibrium behaviour: for initial conditions with an arbitrarily small number of infected birds, a sudden explosion occurs as $\mathcal R_0$ approaches 1. Given the bistability in the disease dynamics when $\mathcal R_0 < 1$, one can employ temporary control measures along with the culling in order to "push the solution" into the basin of attraction of the disease free equilibrium. Temporary control measures might include enhanced biosecurity, isolation of poultries from wild birds and movement ban of all poultry and hatching eggs. These measures have the effect of reducing the transmission rate $\beta$ for a given period of time. To model them, we can define
\begin{equation}
\label{beta_d}
{\beta(t) :=} \quad
\begin{cases}
 \beta,&\hspace{-2mm}  \text{if} \ \ \  \ 0 \leq t < t_1  \ \  \text{and} \  \ t_2 < t \vspace{1.5mm},\\
 \beta^T,&\hspace{-2mm}  \text{if} \ \ \  \ t_1 \leq t \leq t_2 \vspace{1.5mm},\\
\end{cases}
\end{equation}
with $\beta^T<\beta$.  \\
\\
If the duration or strength of the control measure is large enough, then the disease can be eradicated, as shown in Fig.\ref{fig:TMP}.

 \begin{figure}[t]  

 \subfigure[]{ 
 \includegraphics[width=0.99\textwidth,height=.4\textwidth]{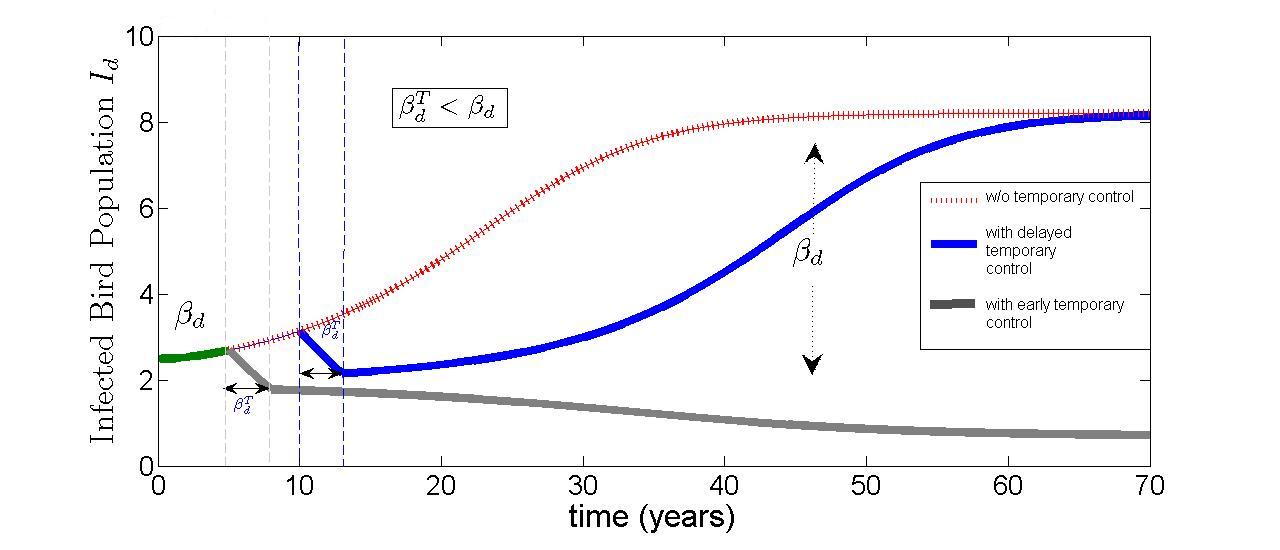}
 }
 \subfigure[]{ 
 \includegraphics[width=0.99\textwidth,height=.4\textwidth]{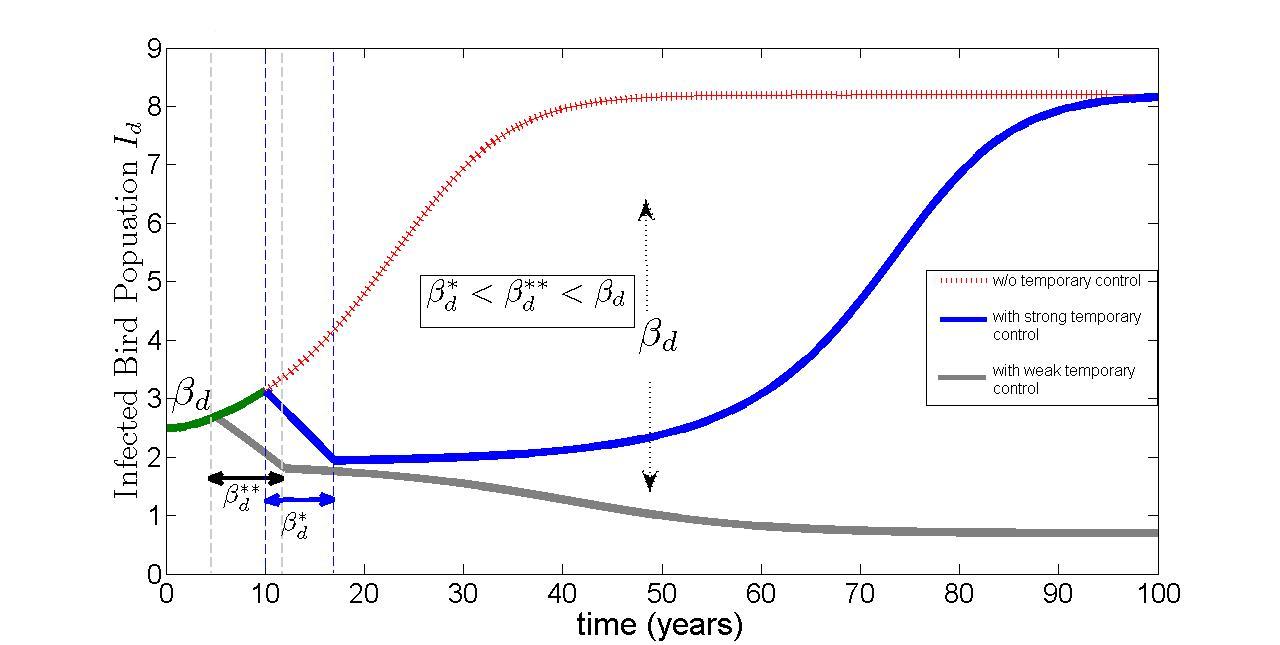}
 }
\caption[Optional caption for list of figures]{
Infected bird population versus time (in years) without temporary control measures or with distinct temporary control measures, along with modified culling. The initial condition is chosen as $(1000,2.5)$ and the same parameter values in Fig.\ref{fig:phaseplan} are used. Also, for temporary control measures, $\beta$ is defined as a piecewise function \eqref{beta_d}. Part (a) shows the solution of the delayed $\text{ (blue)}$ and the early $\text{ (grey)}$ temporary control measures, with the same duration and strength. In addition, part (b) shows the resulting solution of the delayed $\text{ (blue)}$ and the early $\text{ (grey)}$ temporary control measures, with the same duration but distinct strength.}
\label{fig:temporary}
 \end{figure}

  The system \eqref{model1} with modified culling rates (\ref{modifiedcullrate}) causes even a more complicated bifurcation: hysteresis. Even though backward bifurcation has been studied for a long time, hysteresis is less often detected in epidemiological models. The global dynamics of the disease for the parameter values in Fig.\ref{fig:phaseplan} shows that the lower and the upper equilibrium are co-existing attractors whose basin of attraction partition the feasible region. The equilibrium level of the disease critically depends on the location of the initial condition in the phase plane. When $\mathcal R_0>1$ and multiple attractive equilibria are present, introducing the temporary control measures may play a crucial role in keeping the number of infected at a low level and avoiding a sudden jump in the number of infected birds. 
  
  When no temporary control measures are applied, we see that a solution with an initial point in the basin of attraction of the upper equilibrium can experience a sudden jump to the upper equilibrium. However, if the temporary control measures decrease the transmission rate $\beta$ enough and are utilized for long enough time period, the number of infected birds decreases and the solution converges to the lower equilibrium which has the least number of infected birds. Simulations in cases with and without temporary control measures can be seen in Fig.\ref{fig:temporary}. Furthermore, an early application of temporary control measures can shorten the time period in which the temporary control must be applied in order to manage the disease. As can be seen in Fig.\ref{fig:temporary}(a), even though early temporary control and delayed temporary control have the same duration and strength, the solution with early temporary control converges to the lower equilibrium, but the solution of the delayed temporary control converges to the upper equilibrium. In fact in a given time period, early temporary control with weaker strength, i.e. less effective on reducing the transmission rate $\beta$, can be more efficient than delayed temporary control with larger strength. As can be observed in Fig.\ref{fig:temporary}(b), early temporary control measures with weaker strength reduces the number of infected birds so that the disease persists in a low level equilibrium. However, the solution of the delayed temporary control with larger strength converges to the upper equilibrium.

\section{Discussion}

  The emerging threat of a human pandemic caused by the H5N1 avian influenza virus strain magnifies the need for controlling the incidence of H5N1 infection in domestic bird populations. Mass culling has proved effective for isolated outbreaks. However, as a result of socio-economic impacts, culling effort may vary from region to region. In the countries whose poultry systems are dominated by backyard poultries, selective culling of infected flocks is widely used because of economic concerns. For other countries, mass culling is utilized, but for large outbreaks, organizations such as Food and Agriculture Organization (FAO) and the World Health Organization (WHO) suggest a shift from wide-area culling to a modified strategy. The modified strategy entails culling of only infected and high-risk in-contact poultry along with other control measures \cite{FAO0, FAO00, Modcull, WHO0}.  
  
   In this article, we incorporated culling into a basic SI model of avian influenza. Motivated by the distinct culling strategies, we considered different functional forms for the per-capita culling rates $c_S\phi(I), c_I\psi(I)$ in the system \eqref{model1} and analyzed the dynamics.  For the general model, a sufficient condition for global stability of the disease-free equilibrium was found.  In addition, we characterized the culling rates which lead to backward bifurcation.  A more detailed analysis was conducted for three functional forms of culling rates, which modeled the distinct scenarios:  mass culling, modified culling, and selective culling.  In the case of mass culling, i.e. increasing per-capita culling rates, there is a unique globally stable endemic equilibrium when $\mathcal R_0>1$.  For modified culling, simulations show that there can be three endemic equilibria, which leads to bi-stable dynamics in the form of forward hysteresis.  Analytically, we proved that there can be one, three, or five endemic equilibria (if all equilibria are simple roots), and determined the local stability of the equilibria.  Finally, for selective culling, there can be a backward bifurcation, which also causes bi-stable dynamics.  
   
Through our exploration, we showed that non-increasing per-capita culling rates can lead to rich dynamics such as backward bifurcation and forward hysteresis, as opposed to the case of increasing per-capita culling rates.  Thus, authorities should be wary of any indication that culling effort decreases with respect to number of infected, $I$, for some values of $I$, since there may be bi-stable dynamics, as shown for modified and selective culling in our model.  In these cases, simulations suggested that temporary control measures can be employed  to ``drive the solution'' to the region of attraction corresponding to a low level equilibrium or disease free state.

In conclusion, our model and its analysis suggest that, in addition to culling, timely employment of temporary control measures such as enhanced biosecurity, isolation of poultries from wild birds and movement ban of all poultry and hatching eggs can be crucial for reducing the number of infected domestic birds to a low equilibrium level or for eliminating the disease in poultries. 

\section*{Acknowledgments} Maia Martcheva and Hayriye Gulbudak acknowledge partial support from
NSF grant DMS-0817789 and grant DMS-1220342.

\end{document}